\definecolor{lightgray}{gray}{0.5}
\def\rev#1{\textcolor{black}{#1}} 
\definecolor{grey}{rgb}{0.43, 0.5, 0.51}
\definecolor{LightCyan}{rgb}{0.88,1,1}
\newcolumntype{s}{>{\columncolor{LightCyan}}c}
\newcommand{\commentout}[1]{}
\newcommand {\Chi} {{\bf \raise 2pt \hbox{$\chi$}} }
\newcommand{\beq}{\begin{equation}}
\newcommand{\eeq}{\end{equation}}
\newcommand{\beqa}{\begin{eqnarray}}
\newcommand{\eeqa}{\end{eqnarray}}
\newcommand{\bea} {\begin{array}{rl}}
\newcommand{\eea} {\end{array}}
\newcommand{\bepa}{\left\{ \begin{array}{l}}
\newcommand{\eepa} {\end{array}\right.}
\newtheorem{theorem}{Theorem}[section]
\newtheorem{remark}[theorem]{Remark}
\newtheorem{proposition}[theorem]{Proposition}
\newcommand{\qed}{{ \hfill
                       {\unskip\kern 6pt\penalty 500 \raise -2pt\hbox{\vrule\vbox to 6pt{\hrule width 6pt
                       \vfill\hrule}\vrule} \par}   \bigskip }}
\numberwithin{equation}{section}
\title{Evolutionary dynamics of glucose-deprived cancer cells: insights from experimentally-informed mathematical modelling\footnote{This paper is published in the journal of the Royal Society Interface} \\ Supplementary Material
}
\author{
Luis Almeida
\and
J{\'e}r{\^o}me Alexandre Denis 
\and
Nathalie Ferrand
\and
Tommaso Lorenzi\footnote{Corresponding author affiliation and email: Department of Mathematical Sciences ``G. L. Lagrange'', Dipartimento di Eccellenza 2018-2022, Politecnico di Torino, 10129 Torino, Italy - tommaso.lorenzi@polito.it}
\and
\rev{Antonin Prunet}
\and
\and
Mich{\'e}le Sabbah
\and
Chiara Villa\footnote{Corresponding author affiliation and email: Sorbonne Universit{\'e}, CNRS, Universit{\'e} de Paris, Inria, Laboratoire Jacques-Louis Lions UMR 7598, 75005 Paris, France - chiara.villa.1@sorbonne-universite.fr}
}
\begin{document}
\maketitle
\pagestyle{plain}
\pagenumbering{arabic}
\renewcommand{\thefigure}{S\arabic{figure}}
\setcounter{figure}{0}
\renewcommand{\theequation}{S\arabic{equation}}
\setcounter{equation}{0}
\renewcommand{\thesection}{S\arabic{section}}
\setcounter{section}{0}

\section{Mathematical model}\label{sec:model}
Building on the modelling strategies presented in~\cite{lorenzi2016tracking}, we develop a mathematical model that describes the evolutionary dynamics of a population of MCF7-sh-WISP2 cells, structured by the level of MCT1 expression, under the environmental conditions determined by the levels of glucose and lactate in the extracellular environment.  \rev{The model relies on the following assumptions, justified by the literature or our experimental observations:
\begin{itemize}
\item[{\bf A1}] There is a low level of MCT1 expression endowing cells with the highest proliferation rate via glycolysis, and a higher level of MCT1 expression endowing cells with the highest rate of proliferation via lactate reuse when glucose is scarce -- cf. the experimental results underlying our study.
\item[{\bf A2}] There is a threshold level of glucose above which cells prioritise glucose uptake~\cite{keenan2015alternative}.
\item[{\bf A3}] Lactate binding to the membrane of cancer cells triggers regulatory pathways increasing the transcriptional activity -- and thus level of expression -- of MCT1 and, conversely, the interruption of lactate signalling induces a reduction in MCT1 expression levels~\cite{ippolito2019lactate,longhitano2022lactate}.  
\item[{\bf A4}] MCT1 expression levels may undergo fluctuations due to epigenetic changes interfering with the transcriptional activity detailed in assumption {\bf A3}~\cite{huang2013genetic}, and the rate at which these changes occur increases with lactate uptake~\cite{wang2023lactate}.
\item[{\bf A5}] Cancer cells proliferate and die according to their fitness in relation to the environmental conditions they are exposed to, and may also die due to competition for space~\cite{lorenzi2016tracking}.
\item[{\bf A6}] The rate of cell proliferation via glycolysis and the corresponding rates of glucose consumption and lactate production are proportional to the rate of glucose uptake, whereas the rate of cell proliferation via lactate reuse and the corresponding rate of lactate consumption are proportional to the rate of lactate uptake. Furthermore, lactate consumption is mediated by the cells' MCT1 expression level~\cite{khan2020targeting,tasdogan2020metabolic}.
\item[{\bf A7}] Glucose and lactate uptake by cancer cells are mediated by ligand-receptor dynamics~~\cite{santillan2008use}.
\item[{\bf A8}] The MCT1 expression distribution at day 0 is in a Gaussian form -- cf. the experimental results underlying our study.
\end{itemize}
}

\subsection{Preliminaries}\label{sec:model:prel}
We introduce the cell population density function $n(t,y)$, which represents the number of MCF7-sh-WISP2 cells with level of MCT1 expression $y\in\mathbb{R}$ at time $t \in \mathbb{R}_+$ (i.e. the MCT1 expression distribution of MCF7-sh-WISP2 cells at time $t$). The cell number, the mean level of MCT1 expression and the related variance, which provides a possible measure for the level of intercellular variability in MCT1 expression, are then computed, respectively, as
\beq
\label{erhomusigmay}
\rho(t) = \int_{\mathbb{R}} n(t,y) \; {\rm d}y, \quad \mu(t) = \frac{1}{\rho(t)} \int_{\mathbb{R}} y \, n(t,y) \; {\rm d}y, \quad \sigma^2(t)  = \frac{1}{\rho(t)} \int_{\mathbb{R}} y^2 \; n(t,y) \; {\rm d}y- \mu^2(t).
\eeq
We also introduce the functions $G(t)$ and $L(t)$, which model, respectively, the concentrations of glucose and lactate in the extracellular environment.  

The results of {\it in vitro} experiments (cf. Sec. 
3.2 in the Main Manuscript) 
indicate that the \rev{mean of the MCT1 expression distribution of MCF7-sh-WISP2 cells moves from lower to higher expression levels when cells experience glucose deprivation, and from higher to lower expression levels when cells are rescued from glucose deprivation.} Hence, we assume that there is a level of  MCT1 expression (i.e. the fittest level of MCT1 expression) endowing cells with the highest fitness depending on the environmental conditions determined by the concentrations of glucose and lactate. Moreover, the results of {\it in vitro} experiments (cf. Sec.~3.1 in the Main Manuscript) support the idea that proliferation and survival of MCF7-sh-WISP2 cells correlate with glucose uptake when glucose levels are sufficiently high and with lactate uptake when glucose levels are low. Therefore, we further assume that there are a level of MCT1 expression, $y_L$, endowing cells with the highest rate of proliferation via glycolysis and a higher level of MCT1 expression, $y_H>y_L$, endowing cells with the highest rate of proliferation via lactate reuse when glucose is scarce~\rev{(cf. assumption {\bf A1})} -- i.e. if the concentration of glucose in the extracellular environment is lower than a threshold level $G^*$ above which cells stop taking lactate from the extracellular environment in order to prioritise glucose uptake~\rev{(cf. assumption {\bf A2})}. We then introduce the following change of variable
\begin{equation}\label{def:x}
x=\frac{y-y_L}{y_H-y_L}\,,
\end{equation}
so that the rescaled level of MCT1 expression $x=0$ corresponds to the level of MCT1 expression $y=y_L$ and the rescaled level of MCT1 expression $x=1$ corresponds to the level of MCT1 expression $y=y_H$. Under the change of variable defined by Eq.~\eqref{def:x}, representing the rescaled MCT1 expression distribution of MCF7-sh-WISP2 cells at time $t$ by the cell population density function ${n_r(t,x) =  (y_H - y_L) \, n(t,y)}$, we compute the cell number, the mean rescaled level of MCT1 expression and the related variance, respectively, as
\beq
\label{erhomusigma}
\rho_r(t) = \int_{\mathbb{R}} n_r(t,x) \; {\rm d}x, \quad \mu_r(t) = \frac{1}{\rho_r(t)} \int_{\mathbb{R}} x \, n_r(t,x) \; {\rm d}x, \quad \sigma^2_r(t)  = \frac{1}{\rho_r(t)} \int_{\mathbb{R}} x^2 \; n_r(t,x) \; {\rm d}x- \mu^2_r(t).
\eeq

\begin{remark}
Note that the following relations hold between the quantities defined via Eq.~\eqref{erhomusigma} and Eq.~\eqref{erhomusigmay}:
\beq
\label{erhomusigmaxy}
{\rho(t) = \rho_r(t), \quad \mu(t) = y_L + \mu_r(t) (y_H - y_L), \quad \sigma^2(t) = \sigma^2_r(t) (y_H - y_L)^2}.
\eeq
\end{remark}

\subsection{Cell dynamics}\label{sec:model:cells}
The dynamics of the population density function $n_r(t,x)$ is governed by the following partial integro-differential equation (PIDE)
\begin{equation}
\label{ePDEn}
\begin{cases}
\displaystyle{\frac{\partial n_r}{\partial t}  \; - \; \Phi(G(t),L(t))\,\frac{\partial^2 n_r}{\partial x^2} \;+\; \Psi(G(t),L(t),\mu_r(t))\,\frac{\partial n_r}{\partial x} \; = \; R(x,G(t),L(t),\rho_r(t)) \, n_r,  \quad x \in \mathbb{R}}
\\\\
\displaystyle{\rho_r(t) = \int_{\mathbb{R}} n_r(t,x) \; {\rm d}x, \quad \mu_r(t) = \frac{1}{\rho_r(t)} \int_{\mathbb{R}} x \, n_r(t,x) \; {\rm d}x},
\end{cases}
\end{equation} 
where: 
\begin{itemize}[leftmargin=*]
\item the advection term $\displaystyle{\Psi(G,L,\mu_r)\,\frac{\partial n_r}{\partial x}}$ models the effects of environment-induced changes in MCT1 expression mediated by lactate-associated signalling pathways, i.e. SPCs~\rev{(cf. assumption {\bf A3})};
\item the diffusion term $\displaystyle{\Phi(G,L)\,\frac{\partial^2 n_r}{\partial x^2}}$ models the effects of \rev{fluctuations in MCT1 expression due to epigenetic changes, i.e. FECs~(cf. assumption {\bf A4})};
\item the reaction term $\displaystyle{R(x,G,L,\rho_r) \, n_r}$ models the effects of cell proliferation and death under environmental selection on MCT1 expression~\rev{(cf. assumption {\bf A5})}.
\end{itemize}
The modelling strategies underlying the PIDE~\eqref{ePDEn} are detailed in the following. 

\subsubsection*{Modelling cell proliferation and death under environmental selection on MCT1 expression} 
The fitness of cells with rescaled level of MCT1 expression $x$ at time $t$ is modelled by the function 
\begin{equation}\label{eR}
R\big(x,G,L,\rho_r\big) = p(x,G,L)  \;-\; d \, \rho_r.
\end{equation}
Definition~\eqref{eR} translates in mathematical terms to the following biological ideas: all else being equal, cells die due to intracellular competition at rate $d \, \rho_r$, with the parameter $d > 0$ being related to the carrying capacity of the {\it in vitro} system in which the cells are contained~\rev{(cf. assumption {\bf A5})}; cells with the rescaled level of MCT1 expression $x$ proliferate and die under environmental selection on MCT1 expression at rate $p(x,G,L)$ (i.e. the function $p$ is a net proliferation rate). Based on the considerations and assumptions introduced so far, we define
\begin{equation}\label{ep1}
p\big(x,G,L\big) = p_G\big(x,G\big) \; + p_L\big(x,G,L\big) 
\end{equation}
with
\begin{equation}\label{ep2}
p_G\big(x,G\big) =  \gamma_G \, U_G(G)\, \left(1-x^2\right), \quad p_L\big(x,G,L\big) = \gamma_L  \, U_L(G,L) \,  \left[1-\left(1-x\right)^2\right],
\end{equation}
\begin{equation}\label{eu}
U_G(G) = \frac{G^m}{(\alpha_G)^m + G^m} \quad \text{and} \quad  U_L(G,L) = \left(1 - H(G-G^*) \right)\, \frac{L^c}{(\alpha_L)^c+L^c}.
\end{equation}
The function $p_G$ models the net rate of cell proliferation via glycolysis and the function $p_L$ models the net rate of cell proliferation via lactate reuse. The fact that these functions are negative for values of $x$ sufficiently far from $0$ (i.e. the rescaled level of MCT1 expression endowing cells with the highest rate of proliferation via glycolysis) and $1$ (i.e. the rescaled level of MCT1 expression endowing cells with the highest rate of proliferation via lactate reuse when glucose is scarce) captures the idea that cells with less fit levels of MCT1 expression are driven to extinction by environmental selection~\rev{(cf. assumptions {\bf A1}, {\bf A2} and {\bf A5})}. Moreover, the functions $U_G$ and $U_L$ model glucose and lactate uptake, respectively \rev{(cf. assumptions~\rev{{\bf A6}} and {\bf A7})}. In the definitions given by Eqs.~\eqref{ep2} and~\eqref{eu}: 
\begin{itemize}[leftmargin=*]
\item $H(G-G^*)$ is the Heaviside step function centred at the threshold 
 level of glucose $G^*$ (i.e. the level of glucose above which cells stop taking lactate from the extracellular environment and reusing it to produce energy for fuelling their proliferation), that is,
\begin{equation}\label{eheavi}
H(G-G^*) = 
\begin{cases}
0, \quad \text{if } G < G^*
\\
1, \quad \text{if } G \geq G^*;
\end{cases}
\end{equation}
\item $\gamma_G>0$ and $\gamma_L>0$ are the maximum rates of cell proliferation via glycolysis and lactate reuse; 
\item $\alpha_G>0$ and $\alpha_L>0$ are the glucose and lactate concentrations at half receptor occupancy~\cite{santillan2008use};
\item $m>0$ and $c>0$ are the Hill coefficients for glucose and lactate ligand-receptor dynamics~\cite{santillan2008use}. 
\end{itemize}
Under the definitions given by Eq.~\eqref{ep2}, after a little algebra, the definition given by Eq.~\eqref{ep1} can be rewritten as
\begin{equation}\label{eprev}
p\big(x,G,L\big) = a(G,L)-b(G,L) \left(x-X(G,L) \right)^2
\end{equation}
with
\begin{equation}\label{eab}
a(G,L) = \gamma_G \,U_G(G) + \, \frac{\displaystyle \big( \gamma_L  \, U_L(G,L)\big)^2}{\displaystyle \gamma_G \, U_G(G) + \gamma_L  \, U_L(G,L)}, \quad b(G,L)=  \gamma_L  \, U_L(G,L)+ \gamma_G \,U_G(G)
\end{equation}
and
\begin{equation}\label{eX}
X(G,L)= \frac{\displaystyle \gamma_L  \, U_L(G,L)}{\displaystyle \gamma_G \,U_G(G) + \gamma_L  \, U_L(G,L) }.
\end{equation}
Under the environmental conditions defined by the concentrations of glucose $G$ and lactate $L$: $X(G,L)$ represents the fittest rescaled level of MCT1 expression; $a(G,L)$ is the corresponding maximum fitness (i.e. the proliferation rate of cells exhibiting the fittest rescaled level of MCT1 expression); $b(G,L)$ can be seen as a nonlinear selection gradient that quantifies the strength of environmental selection on MCT1 expression. 

\begin{remark}
\label{rem:X}
Note that the definition given by Eq.~\eqref{eX} implies that $0 \leq X(G,L) \leq 1$ for all $G \geq 0$ and $L \geq 0$. In particular, coherently with the considerations and assumptions introduced in Sec.~\ref{sec:model:prel}, under this definition and the definitions given by Eq.~\eqref{eu}, we have that if $G=0$ then $X(G,L)=1$ for any $L>0$, while if $G\geq G^*$ then $X(G,L)=0$ for any $L \geq 0$.
\end{remark}

\begin{remark}
\label{rem:Y}
The fittest level of MCT1 expression, $Y(G,L)$, is obtained from the definition of the fittest rescaled level of MCT1 expression, $X(G,L)$, given by Eq.~\eqref{eX} through the change of variable defined by Eq.~\eqref{def:x}, that is,
\begin{equation}\label{eY}
Y(G,L) = y_L + X(G,L) (y_H - y_L) = y_L +  \frac{\displaystyle \gamma_L  \, U_L(G,L)}{\displaystyle \gamma_G \,U_G(G) + \gamma_L  \, U_L(G,L) } (y_H - y_L).
\end{equation}
\end{remark}

\subsubsection*{Modelling  \rev{FECs and SPCs}  in MCT1 expression}
The rate of \rev{FECs in MCT1 expression} is modelled by the function 
\begin{equation}\label{ephi}
\Phi(G,L)=\beta\big(1+\zeta\,U_L(G,L)\big),
\end{equation}
where the  lactate uptake function $U_L$ is defined via Eq.~\eqref{eu}. Under the definition given by Eq.~\eqref{ephi}, the minimum rate of FECs, $\beta>0$, is increased proportionally to lactate uptake with constant of proportionality $\zeta>0$~\rev{(cf. assumption {\bf A4})}. This translates in mathematical terms to the idea that, since lactate has been shown to be responsible for histone modifications~\cite{bergers2021metabolism,zhang2019metabolic}, the \rev{rate of FECs in MCT1 expression} may be enhanced under glucose-deprivation. 

Moreover, the rate of environment-induced changes in MCT1 expression mediated by lactate-associated signalling pathways~\rev{(cf. assumption {\bf A3})} is modelled by the function
\begin{equation}\label{epsi}
\Psi(G,L,\mu_r) = \Psi^+(G,L) - \Psi^-(G,\mu_r),
\end{equation}
with
\begin{equation}\label{epsi2}
\Psi^+(G,L)  = \lambda_{L} \, U_L(G,L) \quad \text{and} \quad \Psi^-(G,\mu_r)= \lambda_{G} \, H(G-G^*) \, \left(\mu_r\right)_+ \,,
\end{equation}
where the lactate uptake function $U_L$ is defined via Eq.~\eqref{eu}, while $H(G-G^*)$ is the Heaviside step function defined via Eq.~\eqref{eheavi}. The definitions given by Eqs.~\eqref{epsi} and~\eqref{epsi2} translate in mathematical terms to the idea that environment-induced changes mediated by lactate-associated signalling pathways lead to: an increase in MCT1 expression at rate $\Psi^+$, which is proportional to lactate uptake, under glucose deprivation (i.e. when $G < G^*$); to a decrease in MCT1 expression at rate $\Psi^-$ when the glucose level is sufficiently high (i.e. when $G \geq G^*$). The parameters ${\lambda_{L}}>0$ and  ${\lambda_{G}}>0$ model the corresponding maximum rates of environment-induced increase and decrease in MCT1 expression. Moreover, the dependence of $\Psi^-$ on $\left(\mu_r\right)_+ = \max \left\{0, \mu_r\right\}$ captures the fact that interruption of lactate-associated signalling pathways may occur when $G \geq G^*$ if the mean rescaled level of MCT1 expression of the cells is below the fittest level $x=0$ (cf. Remark~\ref{rem:X}).

\subsection{Glucose and lactate dynamics}
The dynamic of the glucose concentration $G(t)$ is governed by the following ordinary differential equation (ODE)
\begin{equation}\label{eG}
\frac{{\rm d}G}{{\rm d}t} =-k_G \, U_G(G)\, \rho_r(t) \,,
\end{equation}
where the glucose uptake function $U_G$ is defined via Eq.\eqref{eu}. The ODE~\eqref{eG} relies on the assumption that glucose is consumed by the cells at a rate proportional to glucose uptake~\rev{(cf. assumption {\bf A6})}, with constant of proportionality $k_G>0$. Moreover, the dynamic of the lactate concentration $L(t)$ is governed by the following ODE
\begin{equation}\label{eL}
\frac{{\rm d}L}{{\rm d}t} =  k_L \, U_G(G) \, \rho_r(t) \,-\eta_L \, \int_\mathbb{R}\big(p_L(x,G,L)\big)_{+}\,n_r(t,x)\,{\rm d}x \,,
\end{equation}
where $\left(p_L\right)_+ = \max \left\{0, p_L\right\}$, with $p_L$ being the function that models the net rate of cell proliferation via lactate reuse, which is defined via Eq.~\eqref{ep2}. Based on earlier studies indicating that most tumours release lactate in quantities linearly related to glucose consumption~\cite{vaupel1989blood}, and coherently with the way in which the effect of glucose consumption is incorporated into the ODE~\eqref{eG}, the ODE~\eqref{eL} relies on the assumption that lactate is produced by the cells at a rate proportional to glucose uptake~\rev{(A6)}, with constant of proportionality $k_L>0$. Moreover, the ODE~\eqref{eL} relies on the additional assumption that lactate is absorbed only by the cells whose rescaled levels of MCT1 expression make them capable of reusing lactate to produce the energy required for their proliferation when glucose is scarce (i.e. cells with rescaled levels of MCT1 expression $x$ corresponding to positive values of $p_L(x,G,L)$), which absorb lactate at a rate proportional to their net proliferation rate~\rev{(cf. assumption {\bf A6})}, with constant of proportionality (i.e. conversion factor for lactate consumption) $\eta_L>0$.

\subsection{Initial conditions}
Informed by the experimental data reported in Fig.2(A) in the Main Manuscript, we define the initial MCT1 expression distribution of MCF7-sh-WISP2 cells as~\rev{(cf. assumption {\bf A8})}
\begin{equation}\label{IC:ny}
n(0,y) = n_{0}(y) \quad \text{with} \quad  n_{0}(y) = \, \frac{\rho_{0}}{\sqrt{2\pi\sigma_{0}^2}} \, \exp \left( -\frac{(y-\mu_{0})^2}{2\sigma_{0}^2}  \right),
\end{equation}
where the initial cell number, $\rho_{0}$, the initial mean level of MCT1 expression, $\mu_{0}$, and the related variance, $\sigma^2_{0}$, are defined as
\begin{equation}\label{IC:parsy}
\rho_0=1.5\times 10^6, \quad \mu_0 = 15.57 \times 10^3, \quad \sigma^2_0= 8 \times 10^6.
\end{equation}
Hence, under the change of variable defined by Eq.~\eqref{def:x}, the initial rescaled MCT1 expression distribution of MCF7-sh-WISP2 cells is
\begin{equation}\label{IC:n}
n_{r}(0,x) = n_{r0}(x) \quad \text{with} \quad  n_{r0}(x) = \frac{\rho_{r0}}{\sqrt{2\pi\sigma_{r0}^2}} \, \exp \left( -\frac{(x-\mu_{r0})^2}{2\sigma_{r0}^2}  \right),
\end{equation}
where (cf. the relations given by Eq.~\eqref{erhomusigmaxy})
\begin{equation}\label{IC:pars}
\rho_{r0} = \rho_{0}, \quad \mu_{r0} = \frac{\mu_0-y_L}{y_H-y_L}, \quad \sigma^2_{r0}= \dfrac{\sigma^2_0}{(y_H-y_L)^2}.
\end{equation}
\begin{remark}
Note that under the relations given by Eq.~\eqref{IC:pars}, we have that $n_{r0}(x)$ in Eq.~\eqref{IC:n} and $n_{0}(y)$ in Eq.~\eqref{IC:ny} are related by $n_{r0}(x)=(y_H-y_L)n_{0}(y)$, i.e. we retrieve the relation between $n_{r}(t,x)$ and $n(t,y)$ introduced in Sec.\ref{sec:model:prel}.
\end{remark}
Moreover, in order to match the experimental data reported in Fig.1(B) in the Main Manuscript, we define the initial concentrations of glucose and lactate, respectively, as
\begin{equation}\label{IC:GL}
G(0) =G_0, \quad L(0)=L_0 ,
\end{equation}
\rev{with $G_0 =5.52$mM and $L_0 =1.67$mM being the average values recorded at day 0 of the glucose-deprivation experiments.}

\subsection{Parameter values}
The values of the model parameters obtained, through the calibration procedure detailed in Section~\ref{sec:methods:calib}, \rev{using data from `glucose-deprivation' experiments conducted on MCF7-sh-WISP2 cells} are summarised in Tab.~\ref{tab:parameters2}\rev{, with the associated bootstrap statistics reported in Tab.~\ref{tab:bootstrap}.} 
\rev{Moreover, Tab.~\ref{tab:bootstrap:ALT} displays the parameter values, along with the corresponding bootstrap statistics, recovered by repeating the calibration procedure using data from both `glucose-deprivation' and `rescue' experiments, which are employed to obtain the numerical results in Fig.~\ref{Fig:calib_ALT}.}

\section{Further details of materials and methods}\label{sec:methods}

\subsection{\textit{In vitro} experiments}\label{sec:methods:exp}
Two breast cancer cell lines are considered: MCF7 (human breast cancer cell line, epithelial phenotype) and MCF7-sh-WISP2 (MCF7 cells invalidated for WISP2 by sh-RNA plasmid, mesenchymal phenotype)~\cite{ferrand2014loss,fritah2008role}.

\subsubsection{Cell proliferation and death}
Cells were routinely maintained in Dulbecco's modified Eagle medium containing 4.5g/l of glucose supplemented with 10\% fetal bovine serum (FBS), L-Glutamine, and antibiotics. For assessing cell proliferation and death, cells were cultured for four days in a medium initially containing 1g/l of glucose. Viable cells were identified via trypan blue exclusion and counted using Beckman Coulter, while cell death was quantified via annexin V-FITC apoptosis staining.

\subsubsection{Flow cytometry analysis}
Cells were stained with fluorochrome-conjugated monoclonal antibodies against human MCT1-FITC (Beckman Coulter) at room temperature in the dark for 20 minutes. Cells were then washed with PBS containing 0.5\% serum and flow cytometry analysis was carried out. The labelled cells were analysed on a FACS Gallios (Beckman Coulter) and data analysis was performed using the Kaluza software.

\subsubsection{Immunofluorescence staining}
Cells were plated on chamber slides and fixed in 4\% paraformaldehyde. Cells were stained with anti-MCT1 antibody and secondary anti-Rabbit FITC-conjugated antibody (Jackson ImmunoResearch, Cambridgeshire, UK). After immunolabelling, cells were washed, stained with 1${\mu}$g/mL DAPI (Sigma), and observed by fluorescence microscopy (BX61, Olympus). 

\subsubsection{Real-time RT-qPCR}
Total RNA was extracted from cell samples using the TRIzol\textsuperscript{\textregistered} RNA purification reagent. RNA quantity and purity were assessed by using a Spectrophotometer DS-11 (Denovix, Wilmington, DE, USA). One microgram of total RNA from each sample was reverse transcribed, and real-time RT-qPCR measurements were performed as described in~\cite{ferrand2014loss}, using an apparatus Aria MX (Agilent Technologies, Santa Clara, CA, USA) with the corresponding SYBR\textsuperscript{\textregistered} Green kit, according to the PROMEGA manufacturer's recommendations. 

\subsection{Model calibration with experimental data} \label{sec:methods:calib}
Experimental data on MCF7-sh-WISP2 cells are used to carry out model calibration through a likelihood-\rev{maximisation} method~\cite{linden2022bayesian,martinez2014bayesopt,myung2003tutorial,spilker2001evaluation}. 
\rev{The likelihood of each parameter set is defined implementing statistical measures obtained from data of replicate experiments, to account for average behaviour.}
The optimal parameter set (OPS) is obtained by \rev{minimising the weighted sum of squared residuals, which corresponds to} maximising the likelihood, through an iterative process described in Sec.\ref{sec:methods:validation}. At each iteration, we solve numerically the model \rev{comprising the PIDE-ODE system~\eqref{ePDEn}, \eqref{eG}, \eqref{eL} subject to the initial conditions defined via Eqs.~\eqref{IC:n}-\eqref{IC:GL}}, using methods analogous to those described in Sec.\ref{sec:methods:numerics}.  
\rev{Uncertainty quantification is conducted using a bootstrapping algorithm~\cite{thompson1996tutorial,zhu1997making}, described in Sec.\ref{sec:methods:bootstrapping}, to obtain emprical 95\% confidence intervals of each parameter in the OPS. }
The {\sc Matlab} source codes along with the data used for model calibration have been made available on GitHub\footnote{https://github.com/ChiaraVilla/AlmeidaEtAl2023Evolutionary}. The obtained OPSs are reported in Tab.S1. \rev{The bootstrapping statistics are reported in Tab.S2, and the empirical probability distributions of the parameters obtained via the bootstrapping procedure are plotted in Fig.\ref{Fig:bootstrap}.}

\setcounter{equation}{24}

\subsubsection{Calibration procedure}\label{sec:methods:validation}

\paragraph{\rev{The experimental data used for model calibration.}} \rev{Let
$S_D=\{{u}^{i,k}_D,\, i=1,\ldots,M\,,k=1,\ldots,K\}$ indicate the set of $M\times K$ data points ${u}^{i,k}_D$, i.e. the $M$ experimentally obtained summary statistics from each of the $K$ replicate experiments. From these, the average $\bar{u}^i_D$ and standard deviation $s^i_D$ of each summary statistic ($i=1,\ldots,M$) are calculated using the standard formulas
\begin{equation}\label{def:meansd}
\bar{u}^i_D = \frac{1}{K} \sum_{k=1}^K {u}^{i,k}_D \,, \qquad s^i_D = \frac{1}{\sqrt{K-1}}\sum_{k=1}^K | {u}^{i,k}_D - \bar{u}^i_D | \,.
\end{equation}
}

\paragraph{\rev{The likelihood and the weighted sum of square residuals.}} \rev{
Let ${S}_P \in \Omega \subset \mathbb{R}^N_{\geq 0}$ indicate the set of parameter values in the $N$-dimensional and bounded parameter space $\Omega$.} 
Assuming Gaussian measurement noise with zero mean~\cite{linden2022bayesian,spilker2001evaluation}, the likelihood \rev{of ${S}_P$ is given by
\begin{equation}\label{def:ld}
\mathcal{L}({S}_P) = \mathds{P}({S}_D\, |\, {S}_P) =   \prod_{i=1}^{M}  \, \frac{1}{\sqrt{2\pi}s^i_D}\, \exp \left( -\frac{(\bar{u}^i_D-{u}^i_P)^2}{2\rev{(s^i_D)}^2}  \right)\,,
\end{equation}
where $\{{u}^i_P, i=1,\ldots,M\}$ indicates the summary statistics predicted by the model under ${S}_P$.  
Since PIDE models provide a mean-field representation of the underlying cellular dynamics, the Gaussian likelihood for each summary statistic ${u}^i_P$ is centred at the experimental data average $\bar{u}^i_D$. The variance of the Gaussian error measurement --  and thus of the Gaussian likelihood -- for each summary statistic ${u}^i_P$ is assumed to be equal to the variance of the experimental data $(s^i_D)^2$, in order to account for the heteroscedasticity~\cite{spilker2001evaluation} suggested by experimental observations. Then the logarithm of the likelihood~\eqref{def:ld} is
\begin{equation}\label{def:logld}
\log \mathcal{L}({S}_P) = C_D - R({S}_P) \,,
\end{equation}
where $\displaystyle{C_D = - \sum_{i=1}^{M} \log \left(\sqrt{2\pi}s^i_D\right)}$ is a constant and 
\begin{equation}\label{def:wssr}
R({S}_P) = \sum_{i=1}^{M} \frac{(\bar{u}^i_D-{u}^i_P)^2}{2\rev{(s^i_D)}^2} 
\end{equation}
is the weighted sum of squared residuals, whereby higher/lower variability in the observed data will result in lighter/heavier weights~\cite{spilker2001evaluation}.
}

\paragraph{Likelihood-\rev{maximisation} method.} 
From Bayes Theorem, the posterior distribution of $\rev{S}_P$ given the data set $\rev{S}_D$ -- i.e. the distribution $\mathcal{P}\left( \rev{S}_P \,|\, \rev{S}_D \right)$ -- is such that
\begin{equation}
\mathcal{P}\left( \rev{S}_P \,|\, \rev{S}_D \right) \propto {\mathcal{P}(\rev{S}_P)\,\mathcal{L}(\rev{S}_P)},
\end{equation}
where $\mathcal{P}(\rev{S}_P)$ is the prior distribution of $\rev{S}_P$, and $\mathcal{L}(\rev{S}_P)$ is the likelihood of $\rev{S}_P$~\cite{myung2003tutorial}. 
Due to little knowledge on the prior distribution of the parameters, we assume each of them to be uniformly distributed in a bounded domain, and seek $\rev{S}_P$ maximising the likelihood.  
In practice, \rev{for numerical reasons~\cite{spilker2001evaluation},} we search for the minimum point of the \rev{weighted sum of squared residuals~\eqref{def:wssr},  which corresponds to the maximum of the log likelihood~\eqref{def:logld},} in the domain assumed for the prior distributions, exploiting the in-built {\sc Matlab} function \texttt{bayesopt}, which is based on Bayesian Optimisation~\cite{martinez2014bayesopt}. 
Due to little knowledge on the parameter values, we take the assumed domain of the prior distributions of most parameters to span several orders of magnitude (see details provided below). These ranges of values are then iteratively updated to ensure that we obtain a good agreement with the experimentally observed MCT1 expression distributions of MCF7-sh-WISP2 cells reported in Fig.2(A) of the Main Manuscript. 

\paragraph{Ranges of parameter values considered in the calibration algorithm.} 
\rev{We consider values of the maximum rates of proliferation in the range ${\gamma}_G,\,{\gamma}_L\in[0.001,3]$ /day,  to ensure we capture all values recorded in~\cite{freischel2021frequency} in a variety of environmental conditions for two breast cancer cells lines.
 Considering this and values of \textit{in vitro} tumour carrying capacities in the range of $10^5-10^7$ cells, as observed here (cf. Fig.1(A) of the Main Manuscript) and in~\cite{mendoza2012mathematical},  the value of the death rate due to competition for space is taken in the range $d\in[10^{-8},10^{-5}]$ /day /cell.} 
\rev{ We take values of the glucose consumption and lactate production rates to be in the range ${\kappa}_G,\,{\kappa}_L\in[10^{-7},10^{-5}]$ mM /day /cell,  considering the values of glucose consumption rates in~\cite{mcgillen2014glucose,mendoza2012mathematical} and those of the tumour carrying capacity range introduced above, knowing that values of ${\kappa}_G$ and ${\kappa}_L$ are of the same order of magnitude~\cite{mcgillen2014glucose}.}
\rev{In the absence of empirically-informed estimates, the value of the conversion factor for lactate consumption is taken in $\eta_L\in[10^{-12},10^{-4}]$ mM /cell, covering a wide range of orders of magnitude, including those used in~\cite{fiandaca2021mathematical} and references therein.  } 
The values of the Hill coefficients ${c}$ and ${m}$ are assumed to be in the interval $[0.9,4]$, since most studies \rev{-- e.g.~\cite{fiandaca2021mathematical,mcgillen2014glucose,molavian2009fingerprint,phipps2015microscale} --} assume Michaelis Menten kinetics (i.e. Hill coefficient equal to 1) but recent works assume positive cooperative binding for glucose uptake~\cite{celora2021phenotypic}. 
\rev{The glucose and lactate concentrations at half receptor occupancy are taken to have values in the range ${\alpha}_G,\,{\alpha}_L \in [0.01,10]$ mM as in~\cite{mcgillen2014glucose,mendoza2012mathematical}.} 
\rev{We take value of the minimum rate of FECs in MCT1 expression $\beta\in[10^{-4},10^{-1}]$ /day, covering the range of values used in~\cite{ardavseva2020comparative,celora2021phenotypic,lorenzi2020discrete} and references therein, and consider the value of the lactate-dependency coefficient $\zeta\in[0,100]$ to avoid an unrealistic blow up of the rate of FECs.  In the absence of further knowledge, we take the maximum rates of SPCs in MCT1 expression to have values in the range ${\lambda}_L,\, {\lambda}_G \in [0,1]$ /day, which includes the range of values considered for $\beta$ as well as the phenotypic drift magnitude considered in previous phenotype-structured PIDE models for cancer evolution~\cite{celora2021phenotypic}. } 
For consistency with the mean MCT1 expression levels recorded in the experiments (cf. Fig.2(B) of the Main Manuscript), we consider $y_L\in[0,15 \times 10^3]$ and $y_H\in[35 \times 10^3,100 \times 10^3]$. 
Finally, the value of the threshold glucose concentration for lactate uptake \rev{is assumed to be above physiological levels, i.e. $G^*>5.5$mM, when calibrating the model with data from `glucose-deprivation' experiments, and later taken} in the interval of experimentally-considered glucose concentrations ${G}^* \in [0,\rev{25}]$~\rev{mM to test our initial assumption by calibrating the model with data from both `glucose-deprivation' and `rescue' experiments.}

\subsubsection{\rev{Uncertainty quantification}}\label{sec:methods:bootstrapping}
\rev{
Given the little amount of data available for model calibration, we make use of a bootstrapping algorithm~\cite{thompson1996tutorial,zhu1997making} to quantify uncertainty in the maximum likelihood estimates obtained from fitting the model to the average values of each summary statistic.  
 The algorithm is composed of the following steps:
\begin{enumerate}
\item Create the $j^{th}$ bootstrap data set $\tilde{S}_D^j = \{u^{ij}_D,\, i=1,\ldots,M \} $ by randomly resampling with replacement from the original dataset, i.e. by selecting the value of each of the $M$ data points randomly from one of the $K$ replicate experiments ($u^{ij}_D\equiv u^{i,k_i}_D$,  where $k_i\in\{1,\ldots,K\}$ for each $i=1,\ldots,M$).
\item Find the $j^{th}$ bootstrap optimal parameter set ${S}^j_B$ maximising the bootstrap likelihood $\mathcal{L}^j_B({S}_P)$, i.e.
\begin{equation}\label{def:lh:boot}
\mathcal{L}^j_B({S}_P) = \mathds{P}({S}_P\, |\, \tilde{S}_D^j) =   \prod_{i=1}^{M}  \, \frac{1}{\sqrt{2\pi}s^i_D}\, \exp \left( -\frac{({u}^{ij}_D-{u}^i_P)^2}{2{(s^i_D)}^2}  \right)\,,
\end{equation}
by repeating the calibration procedure described in Sec.~\ref{sec:methods:validation} to match the data set $\tilde{S}_D^j$.
\item Repeat Points 1 and 2 for $j=1,\ldots,J$ to obtain $J$ bootstrap samples of the maximum likelihood estimate of each parameter -- say,  $\hat{\theta}^j_B$.
\item Calculate bootstrap statistics, such as bootstrap mean $\bar{\theta}_B$, standard deviation $s^\theta_B$, and bias of the maximum likelihood estimate obtained during the main calibration procedure (denoted by $\hat{\theta}_{mle}$)
\begin{equation}\label{def:mean:boot}
\bar{\theta}_B=\frac{1}{J}\sum_{j=1}^J \hat{\theta}^j_B, \quad s^\theta_B= \frac{1}{\sqrt{J-1}}\sum_{j=1}^J | \hat{\theta}^j_B - \bar{\theta}_B |, \quad \text{BIAS}= \hat{\theta}_{mle} - \bar{\theta}_B,
\end{equation}
as well as the empirical 95\% confidence interval, i.e.  the range of values containing the intermediate 95\% bootstrap sample values (removing the first and last 2.5\% quartiles). Note that positive/negative bias suggests over/under-estimation of the parameter in the optimal parameter set of the main calibration procedure.
\end{enumerate}}

\subsection{Numerical methods for the simulations of the mathematical model}\label{sec:methods:numerics}
Numerical solutions of the PIDE-ODE system~\eqref{ePDEn}, \eqref{eG}, \eqref{eL} subject to the initial conditions defined via Eqs.\eqref{IC:n}-\eqref{IC:GL} are constructed using a uniform discretisation of the interval $[0,\rm{T}]$, chosen as computational domain of the variable $t$, with uniform step $\Delta t=10^{-5}$, and a uniform discretisation of the interval $[-3,3]$, chosen as computational domain of the variable $x$, with uniform step $\Delta x=0.002$. Suitable values of the final time of simulations $\rm{T}>0$ are chosen depending on the scenarios under study. 

To solve numerically the PIDE~\eqref{ePDEn}, we impose the following zero-flux boundary conditions
$$
\begin{cases}
\Psi(G(t), L(t))\,n_r(t,-3) - \Phi(G(t), L(t), \mu_r(t))\,\partial_x n_r(t,-3) = 0,
\\\\
\Psi(G(t), L(t))\,n(t,3) - \Phi(G(t), L(t), \mu_r(t))\,\partial_x n_r(t,3) = 0,
\end{cases}
\;\; \forall t \in(0,\rm{T}),
$$
which are implemented by means of first-order forward (at $x=-3$) and backward (at $x=3$) finite difference approximations. We make use of first-order forward difference approximation for the time derivative, second-order central difference approximation for the diffusion term, and a first-order upwind scheme to approximate the advection term. Integral terms are approximated by the corresponding left Riemann sums. Given the numerical values of $n_r(t,x)$, $\rho_r(t)$, $\mu_r(t)$ and $\sigma^2_r(t)$, the corresponding values of $n(t,y)$, $\rho(t)$, $\mu(t)$ and $\sigma^2(t)$ are obtained through the change of variable $n(t,y) = (y_H - y_L)^{-1} n_r(t,x)$ and the relations given by Eq.~\eqref{erhomusigmaxy}, respectively.

To solve numerically the ODEs~\eqref{eG} and~\eqref{eL}, we make use of first-order forward difference approximation for the time derivatives, while integral terms are approximated by the corresponding left Riemann sums.

\subsection{Optimal parameter sets obtained through model calibration}
\renewcommand{\thetable}{S\arabic{table}}
\setcounter{table}{0}
The OPSs $\hat{S}_P$ (\rev{with maximum likelihood estimates for each parameter indicated} up to 4 d.p.) for the mathematical model defined by the PIDE-ODE system~\eqref{ePDEn}, \eqref{eG}, \eqref{eL}, subject to the initial conditions defined via Eqs.~\eqref{IC:n}-\eqref{IC:GL}, in which both  \rev{FECs and SPCs}  in MCT1 expression are included (i.e. the model with $\Phi\not\equiv0$, $\Psi^{\pm}\not\equiv0$) and for reduced models in which only \rev{FECs} in MCT1 expression are included (i.e. the model with $\Phi\not\equiv0$, $\Psi^{\pm}\equiv0$) or only \rev{SPCs} in MCT1 expression are included (i.e. the model with $\Phi\equiv0$, $\Psi^{\pm}\not\equiv0$) are reported in Tab.~\ref{tab:parameters2}. The value of the \rev{weighted sum of squared residuals $R(S_P)$, defined via Eq.\eqref{def:wssr},} related to each parameter set is provided in the last row of Tab.~\ref{tab:parameters2}. The units of measure of the parameters are reported in the last column, where `-' is reported for dimensionless parameters.  \rev{The bootstrap statistics~\eqref{def:mean:boot} and empirical 95\% confidence interval of each parameter obtained during the uncertainty quantification procedure are reported in Tab.~S2, where the BIAS is calculated using the OPS of the full model, i.e. for the values listed in the third column of Tab~S1.  The bootstrap sampling distributions are plotted, along with bootstrap statistics and the OPS used to calculate the BIAS,  in Fig.~\ref{Fig:bootstrap}. Maximum likelihood estimates and bootstrap statistics obtained fitting data from `glucose-deprivation' and `rescue' experiments are reported in Tab.~\ref{tab:bootstrap:ALT}.}

\begin{table}[htb!]
\centering
\caption{\textbf{Optimal parameter sets $\hat{S}_P$ obtained through model calibration \rev{with data from `glucose deprivation' experiments. }} \rev{Note that since we assume $G^*>5.5$mM, both the exact value of this parameter and the value of the parameter $\lambda_G$ are not relevant for predicting dynamics under glucose deprivation and, therefore, they are not provided here -- estimates for the values of these parameters are provided in Tab.\ref{tab:bootstrap:ALT}. Bootstrap sampling distributions are plotted in Fig.~\ref{Fig:bootstrap}}}\label{tab:parameters2}
\footnotesize
\begin{tabular}{|c | c| c| c| c| c|} 
 \hline
\textbf{Parameter} & \textbf{Biological meaning} & \makecell{\textbf{Model with} \\ \textbf{$\Phi\not\equiv0$, $\Psi^{\pm}\not\equiv0$}} & \makecell{\textbf{Model with} \\ \textbf{$\Phi\equiv0$, $\Psi^{\pm}\not\equiv0$}} & \makecell{\textbf{Model with} \\\textbf{$\Phi\not\equiv0$, $\Psi^{\pm}\equiv0$}} & \makecell{\textbf{Units}\\\textbf{of measure}} \\[0.5ex]
 \hline
   $y_L$ & \makecell{MCT1 level corresponding \\ to the maximum rate of \\ proliferation via glycolysis} &  \rev{4.1751} $\times 10^3$ &  \rev{6.3798} $\times 10^3$ &  \rev{9.9326} $\times 10^3$ & - \\ [0.5ex] 
 \hline
   $y_H$  &  \makecell{MCT1 level corresponding \\ to the maximum rate of \\ proliferation via lactate reuse} &   \rev{49.5822} $\times 10^3$ &  \rev{71.3331} $\times 10^3$ &  \rev{48.6315} $\times 10^3$ & - \\ [0.5ex] 
 \hline
 ${d}$ & \makecell{Rate of death due to  \\ intracellular competition} &   $ \rev{4.5232 \times 10^{-8}}$ & $ \rev{1.6174\times 10^{-8}}$ & $ \rev{1.4032\times 10^{-7}}$& /day /cell  \\ [0.5ex] 
 \hline
${\gamma}_G$   & \makecell{Maximum rate of  \\ proliferation via glycolysis} &     \rev{2.8898} &  \rev{2.8307} &  \rev{2.9924}& /day 	\\ [0.5ex] 
 \hline
 ${\gamma}_L$ & \makecell{Maximum rate of \\ proliferation via lactate reuse} &  \rev{0.4278} &  \rev{0.1148} &  \rev{0.2921} &/day 	 \\ [0.5ex] 
 \hline
 ${\alpha}_G$   & \makecell{Glucose concentration \\ at half receptor occupancy} &  \rev{2.7500} &  \rev{3.2362} &  \rev{3.6466} & g/l	  \\ [0.5ex] 
 \hline
  ${\alpha}_L$ & \makecell{Lactate concentration \\ at half receptor occupancy} &  \rev{3.6131} &  \rev{5.9826} &  \rev{3.0933} & \rev{mM}	 \\ [0.5ex] 
 \hline
  $m$   & \makecell{Hill coefficient for glucose \\ ligand-receptor dynamics} &   \rev{1.0066} &  \rev{1.0140} & \rev{1.0294}  & -\\ [0.5ex] 
 \hline
  $c$ & \makecell{Hill coefficient for lactate \\ ligand-receptor dynamics} & \rev{1.9997} &  \rev{1.6730} &  \rev{2.2783}  & - \\ [0.5ex] 
 \hline 
   ${\beta}$ & \makecell{Minimum rate of \rev{FECs} \\ in MCT1 expression} & $\rev{6.2992 \times 10^{-4}}$ & / & \rev{$0.0152$}  &  /day	 \\ [0.5ex] 
 \hline
${\zeta}$ & \makecell{Lactate-dependency coefficient \\ of the rate of \rev{FECs}\\ in MCT1 expression} & \rev{10.8609} & / & \rev{6.9453}& -  \\ [0.5ex] 
 \hline
  ${\lambda_{L}}$ & \makecell{Maximum rate of \rev{SPCs} \\ \rev{increasing MCT1 expression} } & $\rev{0.0894}$ & \rev{0.0905} & / & /day  \\ [0.5ex] 
 \hline
  ${\kappa}_G$   &  \makecell{Rate of glucose \\ consumption}  & $\rev{2.4618}\times10^{-6}$ & $\rev{2.6930}\times10^{-6}$ & $\rev{2.8446}\times 10^{-6}$ &  \rev{mM} /day /cell 	\\ [0.5ex] 
 \hline
  ${\kappa}_L$  & \makecell{Rate of lactate \\ production}  & $\rev{4.3323}\times 10^{-6}$ &  $\rev{4.4738}\times 10^{-6}$ & $\rev{4.7253}\times 10^{-6}$ &  \rev{mM} /day /cell  \\[0.5ex]  
 \hline
  ${\eta}_L$  & \makecell{Conversion factor for \\  lactate consumption} & $\rev{8.1164\times 10^{-7}}$ & $ \rev{1.4316\times 10^{-6}}$ & $ \rev{3.0079\times 10^{-7}}$&  \rev{mM} /cell	   \\ [0.5ex] 
 \hline \hline
\rule{0pt}{10pt}
 \rev{$R(\hat{S}_P)$} & \rev{ \makecell{Weighted sum of  \\   squared residuals}} 
 & \rev{$84.1286$} & \rev{$113.8088$} & \rev{$195.4532$} & - \\ [0.5ex] 
 \hline
\end{tabular}
\end{table}	


\paragraph{\rev{Calibration results fitting data from `glucose-deprivation' experiments.}}
\rev{The maximum likelihood estimates of the parameters present in each model variation: (i) are consistent across models; (ii) are found in parameter ranges and orders of magnitude consistent with the current modelling and biological literature; and (iii) all provide a good qualitative agreement with the experimental data.  Furthermore, the bootstrap sampling distributions obtained via the uncertainty quantification procedure, conducted on the full model, are in agreement with the estimate of each parameter in the OPS, further supporting the validity of the OPS.  
In particular,  values in the OPS are consistently found in the interval $\bar{\theta}^i_B\pm s^i_B$ (cf.  green vertical lines and red error bars in Fig.\ref{Fig:bootstrap}), i.e. significantly close to the bootstrap mean. The only exceptions are found in the values of the parameters $\gamma_L$ and $d$, for both of which we recorded a relatively large negative bias, suggesting the values of these parameter were simultaneously overestimated in the main calibration algorithm. Nonetheless, the value of $\gamma_G$ is consistently higher than $\gamma_L$ -- of one or 2 orders of magnitude -- which is in line with the biologically coherent notion that cell proliferation via glycolysis is more efficient than via alternative metabolic pathways~\cite{keenan2015alternative}, with maximum net proliferation rates via glycolysis being amongst the largest values recorded for cancer cells in~\cite{freischel2021frequency}.  
Uncertainty quantification of the maximum likelihood estimates of the Hill coefficients $m$ and $c$ suggests mostly Michaelis-Menten dynamics are at play for glucose uptake, a result supported by many works in the literature --see, for instance, ~\cite{fiandaca2021mathematical,mcgillen2014glucose,molavian2009fingerprint,phipps2015microscale} --, and stronger positive cooperative binding for lactate uptake. This can be regarded as an additional evolutionary mechanism of cancer cells to survive glucose-deprivation once they acquire the ability to reuse lactate. Calibration results on the parameters $\beta$, $\lambda_L$, and $\zeta$ confirm the following trends: the rate of FECs in MCT1 expressions can become 10 times larger in the presence of a high concentration of lactate; FECs in MCT1 expression occur at a rate that is 2 orders (or 1 order, under high lactate concentrations) of magnitude smaller than that of SPCs. Finally, we remark that the maximum likelihood estimate for $\kappa_G$ is consistently about twice as large as $\kappa_L$, as supported by the literature~\cite{mcgillen2014glucose}.
}

\begin{table}[htb!]
\centering
\caption{\textbf{\rev{Bootstrap statistics~\eqref{def:mean:boot} ($J=200$) obtained for uncertainty quantification with data from `glucose deprivation' experiments.}} \rev{The BIAS is calculated with $\hat{\theta}_{mle}$ in the optimal parameter set reported in the third column of Tab.\ref{tab:parameters2}.}}\label{tab:bootstrap} 
\footnotesize
\renewcommand{\arraystretch}{1.5}
\rev{\begin{tabular}{|c | c| c| c| c| c|} 
 \hline
\textbf{Parameter} & \textbf{Mean ($\bar{\theta}_B$)} & \makecell{\textbf{Standard} \\\textbf{deviation ($s^\theta_B$)}} 
& \textbf{BIAS} & \makecell{\textbf{Empirical 95\%} \\\textbf{Confidence Interval}} & \makecell{\textbf{Units}\\\textbf{of measure}} \\[0.5ex]
 \hline
   $y_L$ & 5.5442$\times 10^3$  &  3.2132 $\times 10^3$ &  -1.3690 $\times 10^3$ &  [0.9997,10.7084]$\times 10^3$ & - \\ [0.5ex] 
 \hline
   $y_H$  &  52.0074$\times 10^3$  &  4.6521 $\times 10^3$ &  0.5425 $\times 10^3$ &  [42.0513,59.7910]$\times 10^3$  & - \\ [0.5ex] 
 \hline
 ${d}$ &   $ 2.7315 \times 10^{-8}$ & $1.1327 \times 10^{-8}$ & $ 1.7917 \times 10^{-8}$ & [0.1043 ,0.4881]$\times 10^{-7}$ & /day /cell  \\ [0.5ex] 
 \hline
${\gamma}_G$   & 2.6901 &     0.2103 & 0.1996 &  [2.2339, 2.9833] & /day 	\\ [0.5ex] 
 \hline
 ${\gamma}_L$ &0.1802  &  0.1098 &  0.2476 & [0.0539,0.4290] &/day 	 \\ [0.5ex] 
 \hline
 ${\alpha}_G$   & 3.6320 &  1.4511 &  -0.8820 & [1.1456,6.6794] & g/l	  \\ [0.5ex] 
 \hline
  ${\alpha}_L$ & 3.0706 &  1.4633 &  0.5425 &  [1.1488,6.4890] & \rev{mM}	 \\ [0.5ex] 
 \hline
  $m$  & 1.1424 & 0.1414 & -0.1358  &  [0.9568,1.4324]  & - \\ [0.5ex]  
 \hline
  $c$ & 2.6921 & 0.8918 & -0.6924 & [ 1.1876,3.9769]  & -\\ [0.5ex] 
 \hline 
   ${\beta}$ & 5.0096$\times10^{-4}$ &  1.3779$\times10^{-4}$ &  1.2896$\times10^{-4}$ &  [0.2309,0.6940]$\times10^{-3}$ &  /day	 \\ [0.5ex] 
 \hline
${\zeta}$ & 8.1224 & 3.2471 & 2.7385 & [ 2.2757,14.0522] & -  \\ [0.5ex] 
 \hline
 ${\lambda_{L}}$ & 0.0885 & 0.0112 & 9.0230$\times10^{-4}$ & [0.0657,  0.1086] & /day  \\ [0.5ex] 
 \hline
  ${\kappa}_G$   & 2.9733$\times 10^{-6}$ & 4.8463$\times 10^{-7}$ & -5.1151$\times 10^{-7}$ & [0.2066 ,0.4057]$\times 10^{-5}$&  \rev{mM} /day /cell 	\\ [0.5ex] 
 \hline
  ${\kappa}_L$  & 4.5190$\times 10^{-6}$ & 6.1788$\times 10^{-7}$ & -1.8673$\times 10^{-7}$ & [0.3295,0.5476]$\times 10^{-5}$&  \rev{mM} /day /cell  \\[0.5ex]  
 \hline
  ${\eta}_L$  &  1.1982$\times 10^{-6}$ & 6.2939$\times 10^{-7}$ & -2.9643$\times 10^{-7}$ & [0.5311,2.7340]$\times 10^{-6}$&  \rev{mM} /cell	   \\ [0.5ex] 
 \hline 
\end{tabular}}
\end{table}	


\paragraph{\rev{Calibration results fitting data from both `glucose-deprivation' and `rescue' experiments.}}
\rev{First of all, we note that the value of $G^*$ in the OPS and the bootstrap sample distributions is close to 5.5mM, i.e. physiological levels of glucose, supporting the assumption made during calibration using `glucose-deprivation' experiments.  
The estimated value of the parameter $\lambda_G$ appears to be as large as, if not more than, the estimated value of $\lambda_L$ reported in Tab.~\ref{tab:bootstrap}, which is consistent with the analogous nature of the biological mechanisms to which these parameters are linked.
Similar conclusions to those reported above can be drawn, with the remarkable exception of the relation between values of $\gamma_G$ and $\gamma_L$. Nonetheless, the delay in the increase in cell numbers observed using the parameter set in Tab.~\ref{tab:bootstrap:ALT} suggests additional evolutionary mechanisms may be at play when glucose levels are around 20mM. Interestingly, a decrease in net proliferation rates at such large glucose concentrations has been reported in~\cite{freischel2021frequency}, which would explain the inconsistencies between the calibration carried out using data from `glucose-deprivation' experiments alone and the calibration relying on data from both `glucose-deprivation' and `rescue' experiments, and the poorer quantitative fit in Fig.~\ref{Fig:calib_ALT}.
}

\begin{table}[htb!]
\centering
\caption{\textbf{\rev{Bootstrap statistics~\eqref{def:mean:boot} ($J=200$) obtained for uncertainty quantification with data from both `glucose deprivation' and `rescue' experiments. }}\rev{The optimal parameter set correlates with weighted sum of squared residuals $R(\hat{S}_P)=4.8147\times 10^5$. 
The units of measure of each parameter value are as reported in Tab.\ref{tab:parameters2}, values of the parameter $G^*$ are in units of mM and those of the parameter $\lambda_G$ are in units of (day)$^{-1}$.  Bootstrap sampling distributions are plotted in Fig.~\ref{Fig:bootstrap_ALT}.}}\label{tab:bootstrap:ALT} 
\footnotesize
\renewcommand{\arraystretch}{1.5}
\rev{\begin{tabular}{|c |c | c| c| c| c| c|} 
 \hline
\textbf{Parameter} & \makecell{\textbf{Optimal} \\\textbf{parameter} \\\textbf{set ($\hat{\theta}_{mle}$)}} & \textbf{Mean ($\bar{\theta}_B$)} & \makecell{\textbf{Standard} \\\textbf{deviation ($s^\theta_B$)}} 
& \textbf{BIAS} & \makecell{\textbf{Empirical 95\%} \\\textbf{Confidence Interval}}  \\[0.5ex]
 \hline
   $y_L$ & 2.6711$\times 10^3$ & 2.9773$\times 10^3$  &  0.9773 $\times 10^3$ &  -0.3061 $\times 10^3$ &  [1.2317,4.7918]$\times 10^3$  \\ [0.5ex] 
 \hline
   $y_H$  & 70.8005$\times10^3$ &  69.9811$\times 10^3$  &  2.2316 $\times 10^3$ &  0.8194 $\times 10^3$ &  [65.363,7.3986]$\times 10^3$   \\ [0.5ex] 
 \hline
 ${d}$ & 1.2852$\times 10^{-7}$ &   $ 1.6046 \times 10^{-7}$ & $2.2275 \times 10^{-8}$ & $ -3.1935 \times 10^{-8}$ & [1.1003,1.9221]$\times 10^{-7}$  \\ [0.5ex] 
 \hline
${\gamma}_G$  & 2.0329 & 1.9956 &     0.0913 & 0.0373 &  [1.8326,2.1603] 	\\ [0.5ex] 
 \hline
 ${\gamma}_L$ & 2.0426  &2.0147  &  0.0928 &  0.0279 & [1.8258,2.1710] 	 \\ [0.5ex] 
 \hline
 ${\alpha}_G$ & 3.0815  & 3.8164 &  0.9890 &  -0.7349 & [2.1793,5.8222] 	  \\ [0.5ex] 
 \hline
  ${\alpha}_L$ & 6.6278 & 6.3020 &  1.4101 &  0.3258 &  [3.5671,8.6386] 	 \\ [0.5ex] 
 \hline
  $m$  & 0.9150 & 1.0024 & 0.0427 & -0.0874 & [0.9206,1.0729]  \\ [0.5ex] 
 \hline
  $c$ & 0.9165 & 0.9895 & 0.0454 &  -0.0731 &  [0.9112,1.0710]   \\ [0.5ex] 
 \hline 
     $G^*$ & 5.7999  & 6.0598 &  0.7747 &  -0.2599 & [4.3033,7.5067]  \\ [0.5ex]  
 \hline
   ${\beta}$ & 3.4736$\times 10^{-4}$ & 5.7841$\times10^{-4}$ &  2.2511$\times10^{-4}$ &  -2.3095$\times10^{-4}$ &  [1.6953,9.1985]$\times10^{-4}$ 	 \\ [0.5ex] 
 \hline
${\zeta}$ & 9.7698 & 7.4438 & 3.3634 & 2.3261 & [1.6020,13.5169]   \\ [0.5ex] 
 \hline
 ${\lambda_{L}}$ & 4.5879$\times 10^{-2}$ & 0.1056 & 0.0326 & -0.0597 & [0.0477,0.1831]   \\ [0.5ex] 
 \hline
 ${\lambda_{G}}$ & 0.1012 & 0.2786 & 0.0456 & -0.1774 & [0.1908,0.3646]  \\ [0.5ex] 
 \hline
  ${\kappa}_G$ & 1.3394$\times 10^{-6}$   & 1.6594$\times 10^{-6}$ & 3.8485$\times 10^{-7}$ & -3.2002$\times 10^{-7}$ & [0.8632,2.2686]$\times 10^{-6}$	\\ [0.5ex] 
 \hline
  ${\kappa}_L$ & 2.4345$\times 10^{-6}$ & 2.1349$\times 10^{-6}$ & 8.2256$\times 10^{-7}$ & 2.9955$\times 10^{-7}$ & [0.6541,3.7297]$\times 10^{-6}$  \\[0.5ex]  
 \hline
  ${\eta}_L$ & 2.5557$\times10^{-7}$  &  5.4953$\times 10^{-7}$ & 2.1185$\times 10^{-7}$ & -2.9396$\times 10^{-7}$ & [1.4937,9.4075]$\times 10^{-7}$   \\ [0.5ex] 
 \hline 
\end{tabular}}
\end{table}	

\newpage

\newpage

\subsection{Analysis of the mathematical model}
\rev{We build on the analytical methods and results presented in~\cite{alfaro2014explicit,chisholm2016evolutionary,villa2021evolutionary}. 
}
We first characterise the qualitative and quantitative properties of the solution to the PIDE~\eqref{ePDEn} subject to the initial condition~\eqref{IC:n} (cf. Proposition~\ref{Prop1}) and then study its convergence to equilibrium under fixed concentrations of glucose and lactate (cf. Theorem~\ref{Theo1}).


\begin{proposition}
\label{Prop1}
Let assumptions~\eqref{eR} and \eqref{eprev} hold. Then, the PIDE~\eqref{ePDEn} subject to the initial condition~\eqref{IC:n} admits the exact solution
\begin{equation}\label{gaussian}
n_r(t,x) = \frac{\rho_r(t)}{\sqrt{2 \pi \, \sigma^2_r(t)}} \, \exp\left[-\frac{\left(x-\mu_r(t) \right)^2}{2 \, \sigma^2_r(t)}\right], 
\end{equation}
with $\rho_r(t)$, $\mu_r(t)$ and $v_r(t) = 1/\sigma^{2}_r(t)$ being the components of the solution to the following Cauchy problem
\begin{equation}
\label{eq:rhomuv}
\left\{
\begin{array}{ll}
\displaystyle{\frac{{\rm d}v_r}{{\rm d}t}= 2 \left(b(G,L)-\Phi(G,L) v^2_r\right), }\\\\
\displaystyle{\frac{{\rm d}\mu_r}{{\rm d}t} = \frac{2 \, b(G,L)}{v_r} \left(X(G,L)-\mu_r\right) + \Psi(G,L,\mu_r), }\\\\
\displaystyle{\frac{{\rm d}\rho_r}{{\rm d}t} = \left[\left(a(G,L)-\frac{b(G,L)}{v_r}-b(G,L) \left(X(G,L) - \mu_r \right)^2 \right)-d \rho_r \right]\rho_r, }\\\\
v_r(0) = 1/\sigma^{2}_{r0}, \quad \mu_r(0) = \mu_{r0}, \quad \rho_r(0) = \rho_{r0}, 
\end{array}
\right.
\quad 
t \in (0,\infty).
\end{equation} 
\end{proposition}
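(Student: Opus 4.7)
The plan is to proceed by an explicit Gaussian ansatz and matching of polynomial coefficients in $x$. The key observation is that the reaction term in~\eqref{ePDEn} is a quadratic polynomial in $x$ (thanks to~\eqref{eprev}) while the advection and diffusion coefficients depend on $t$ alone through $G(t)$, $L(t)$, and $\mu_r(t)$; such a structure preserves the Gaussian form of the initial datum~\eqref{IC:n}. I would therefore postulate that the solution has the form~\eqref{gaussian} with yet-to-be-determined smooth functions $\rho_r(t)$, $\mu_r(t)$ and $v_r(t)$ satisfying $\rho_r(0)=\rho_{r0}$, $\mu_r(0)=\mu_{r0}$ and $v_r(0)=1/\sigma^{2}_{r0}$, and then derive the ODEs that these functions must satisfy by substitution into~\eqref{ePDEn}.

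First, I would compute the logarithmic derivatives of the ansatz: writing $u:=x-\mu_r(t)$, one has $\partial_t n_r / n_r = \dot\rho_r/\rho_r + \dot v_r/(2v_r) - (\dot v_r/2)\,u^2 + v_r\,u\,\dot\mu_r$, together with $\partial_x n_r/n_r = -v_r u$ and $\partial_{xx} n_r/n_r = v_r^2 u^2 - v_r$. Substituting these expressions into~\eqref{ePDEn} and using~\eqref{eR}--\eqref{eprev} with the expansion $(x-X)^2 = u^2 + 2u(\mu_r-X) + (\mu_r-X)^2$, I would divide the resulting identity by $n_r$ to obtain a polynomial identity in $u$ that has to hold for every $x\in\mathbb{R}$ and every $t\geq 0$.

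The core of the argument is then to match the coefficients of $u^2$, $u^1$ and $u^0$ separately, since a polynomial in $u$ that vanishes for all $u$ must have all coefficients equal to zero. The $u^2$ coefficient gives $\dot v_r = 2\bigl(b(G,L) - \Phi(G,L)\,v_r^2\bigr)$, which is the first ODE in~\eqref{eq:rhomuv}. The $u^1$ coefficient yields $v_r\,\dot\mu_r - \Psi(G,L,\mu_r)\,v_r = 2b(G,L)\bigl(X(G,L)-\mu_r\bigr)$, i.e.\ the second ODE in~\eqref{eq:rhomuv}. Finally, substituting $\dot v_r /(2v_r) = b(G,L)/v_r - \Phi(G,L)\,v_r$ into the $u^0$ balance, the $\Phi v_r$ terms cancel and one is left with the logistic-type equation for $\rho_r$ stated in~\eqref{eq:rhomuv}. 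The initial conditions are inherited directly from~\eqref{IC:n} by identifying $\rho_{r0}$, $\mu_{r0}$ and $1/\sigma^2_{r0}$ as the parameters of the Gaussian at $t=0$.

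To close the proof I would check consistency with the nonlocal constraints in~\eqref{ePDEn}: the moment identities in~\eqref{erhomusigma} hold trivially for a Gaussian of the form~\eqref{gaussian} provided $v_r(t)>0$, so the functions $\rho_r(t)$ and $\mu_r(t)$ appearing in the ansatz coincide with those defined through integration against $n_r$, confirming that the ansatz solves the full integro-differential problem and not just the linear PDE obtained by freezing $\mu_r$ and $\rho_r$. I expect the main obstacle to be purely bookkeeping, namely organising the substitution so that the cancellation in the $u^0$ balance is transparent; positivity of $v_r$ on $[0,\infty)$ (needed for the Gaussian to remain well defined) follows from the Riccati-type equation $\dot v_r = 2(b - \Phi v_r^2)$ with $v_r(0)>0$ and $b,\Phi\geq 0$, which forces $v_r$ to stay above any initial small positive value by a standard comparison argument.
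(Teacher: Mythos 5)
Your proposal is correct and follows essentially the same route as the paper's proof: substitute the Gaussian ansatz into the PIDE rewritten via \eqref{eR} and \eqref{eprev}, divide by $n_r$, and match the quadratic, linear and constant terms in $x-\mu_r$ to obtain the three ODEs of \eqref{eq:rhomuv} (the paper's step of ``choosing $x=\mu_r$'' is exactly your $u^0$ balance). Your added checks --- that the Gaussian's moments reproduce the nonlocal quantities $\rho_r$ and $\mu_r$ appearing in the equation, and that $v_r$ stays positive by comparison in the Riccati equation --- are sensible refinements but do not change the argument.
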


\begin{proof}
In the remainder of the proof we use the abridged notation
\begin{equation*}
a\equiv a(G,L)\,, \;\; b\equiv b(G,L)\,, \;\; X\equiv X(G,L)\,, \;\; \Phi\equiv \Phi(G,L)\,, \;\; \Psi\equiv \Psi(G,L,\mu_r)
\end{equation*}
and drop the subscripts $r$ for brevity.

Substituting the definitions given by Eqs.\eqref{eR} and \eqref{eprev} into the PIDE~\eqref{ePDEn} yields
\begin{equation}
\label{e1-multi}
\frac{\partial n}{\partial t} = \Phi \,\frac{\partial^2 n}{\partial x^2} \; - \Psi \frac{\partial n}{\partial x} \; + \; \left[a - b \, (x -  X)^2 - d \ \rho(t) \right] n, \quad x \in \mathbb{R}.
\end{equation} 
Building upon the results presented in~\cite{chisholm2016evolutionary,villa2021evolutionary}, we make the ansatz~\eqref{gaussian}. 
Substituting this ansatz into Eq.~\eqref{e1-multi} and introducing the notation $v(t) = 1/\sigma^{2}(t)$ we find
\begin{equation}
\label{eAnsatzTest}
\begin{split}
\frac{1}{\rho}\frac{{\rm d}\rho}{{\rm d} t}+\frac{1}{2v}\frac{{\rm d}v}{{\rm d} t}\; =\; & \frac{1}{2}\frac{{\rm d}v}{{\rm d} t}\left(x-\mu\right)^2 - \frac{{\rm d} \mu}{{\rm d}t} \, v \left(x-\mu\right) + \,  \Phi\left[v^2 \left(x-\mu\right)^2 -v \right] +  \\[5pt]
&+ \Psi v(x-\mu)\, +\, a  - b \, \left(x-X\right)^2 - d \rho.
\end{split}
\end{equation} 
Equating the second-order terms in $x$ gives the following differential equation for $v$
\begin{equation}
\label{odevi}
\frac{{\rm d}v}{{\rm d} t} + 2 \Phi v^2= 2 \, b.
\end{equation} 
Equating the coefficients of the first-order terms in $x$, and eliminating $\dfrac{{\rm d}v}{{\rm d} t}$ from the resulting equation, yields
\begin{equation}
\label{odemui}
\frac{{\rm d}\mu}{{\rm d} t} = \frac{2  b (X - \mu)}{v} + \Psi.
\end{equation} 
Choosing $x=\mu$ in Eq.~\eqref{eAnsatzTest}, and eliminating $\dfrac{{\rm d}v}{{\rm d} t}$ from the resulting equation, we obtain
\begin{equation} 
\label{oderhoi}
\frac{{\rm d}\rho}{{\rm d} t} = \left[\left(a - \frac{b}{v} - b \left(X - \mu \right)^2 \right) - d \rho \right]\rho.
\end{equation}
Under the initial condition given by Eq.~\eqref{IC:n}, we have
$$
v(0) = 1/\sigma^2_0, \quad \mu(0) = \mu_0, \quad \rho(0) = \rho_0,
$$
and imposing these initial conditions for the ODEs~\eqref{odevi}-\eqref{oderhoi} yields the Cauchy problem~\eqref{eq:rhomuv}.
\qed
\end{proof}

\begin{theorem} \label{Theo1}
Let assumptions~\eqref{eR}, \eqref{eu}, \eqref{eprev}, \eqref{eX}, \eqref{epsi} and~\eqref{epsi2} hold. Let also 
\begin{equation}\label{barGL}
G(t)\equiv \overline{G}\geq0 \quad \text{and} \quad L(t)\equiv \overline{L}\geq0\,.
\end{equation}
Then, the solution of the PIDE~\eqref{ePDEn} subject to the initial condition~\eqref{IC:n} is such that 
\beq
\label{Th1i}
\rho_r(t) \longrightarrow \rho_{r \infty}(\overline{G},\overline{L}), \quad \mu_r(t) \longrightarrow \mu_{r \infty}(\overline{G},\overline{L}), \quad \sigma^2_r(t) \longrightarrow \sigma^2_{r \infty}(\overline{G},\overline{L}) \quad \text{as } t \to \infty,
\eeq
with
\beq
\label{Th1ii}
\begin{split}
&\rho_{r \infty}(\overline{G},\overline{L}) = \max\Bigg(0,\, \frac{1}{d} \left[a(\overline{G},\overline{L})-\sqrt{\Phi(\overline{G},\overline{L})\,b(\overline{G},\overline{L})}\, - \frac{\big(\Psi^+(\overline{G},\overline{L})\big)^2}{4\Phi(\overline{G},\overline{L})}\right]\Bigg), \\[5pt]
&\mu_{r \infty}(\overline{G},\overline{L}) = X(\overline{G},\overline{L}) + \frac{\Psi^+(\overline{G},\overline{L})}{2\sqrt{\Phi(\overline{G},\overline{L})\,b(\overline{G},\overline{L})}}, \quad \sigma^2_{r \infty}(\overline{G},\overline{L}) =\sqrt{\frac{\Phi(\overline{G},\overline{L})}{b(\overline{G},\overline{L})}}.
\end{split}
\eeq
\end{theorem}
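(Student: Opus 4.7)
The plan is to invoke Proposition~\ref{Prop1} to reduce the convergence problem for the PIDE to a long-time analysis of the ODE system~\eqref{eq:rhomuv}, exploiting the decoupled, triangular structure of the system together with the fact that under~\eqref{barGL} all coefficients $a(\bar G,\bar L)$, $b(\bar G,\bar L)$, $X(\bar G,\bar L)$, $\Phi(\bar G,\bar L)$, $\Psi^+(\bar G,\bar L)$ are constants, and $\Psi = \Psi^+ - \lambda_G H(\bar G-G^*)(\mu_r)_+$ depends only on $\mu_r$ among the dynamical variables.

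First I would handle the $v_r$ equation, which is completely decoupled: $dv_r/dt = 2(b-\Phi v_r^2)$ is a scalar autonomous Riccati-type ODE with $\Phi>0$ (since $\Phi\geq\beta>0$) and $b\geq 0$. If $b>0$, the unique positive equilibrium $v_r^\star = \sqrt{b/\Phi}$ is globally attracting on $(0,\infty)$, since $dv_r/dt > 0$ for $0<v_r<v_r^\star$ and $dv_r/dt<0$ for $v_r>v_r^\star$; hence $v_r(t)\to v_r^\star$, equivalently $\sigma_r^2(t)\to\sqrt{\Phi/b}$. This gives the claimed $\sigma^2_{r\infty}$.

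Next I would turn to the $\mu_r$ equation, which becomes asymptotically autonomous once $v_r$ is replaced by $v_r^\star$ plus a vanishing perturbation. I would split into two cases according to $H(\bar G-G^*)$. If $\bar G<G^*$ then $\Psi^-\equiv 0$ and the equation is the affine scalar ODE $d\mu_r/dt = (2b/v_r)(X-\mu_r) + \Psi^+$, which is asymptotically a linear contraction towards $\mu_r^\star := X + \Psi^+/(2\sqrt{\Phi b})$ — this convergence follows from a standard Gr\"onwall estimate for asymptotically autonomous scalar linear ODEs with an exponentially contracting autonomous limit. If $\bar G\geq G^*$, then by Remark~\ref{rem:X} and~\eqref{eu} one has $X=0$ and $\Psi^+=0$, so $\mu_r^\star=0$; writing the ODE as $d\mu_r/dt = -(2b/v_r)\mu_r - \lambda_G(\mu_r)_+$, I would argue directly: the right-hand side is strictly negative for $\mu_r>0$ and strictly positive for $\mu_r<0$ (as long as $b>0$), so by a Lyapunov argument with $V(\mu_r)=\mu_r^2$ one obtains $\mu_r(t)\to 0$. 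In the degenerate situation $b=0$ (which, under~\eqref{eu}, only occurs when $\bar G=0$ and $\bar L=0$, whence also $\Psi^+=0$ and $a=0$), the formulas~\eqref{Th1ii} collapse consistently and the convergence of $\mu_r$ is immediate from the resulting trivial dynamics.

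Finally I would close the argument with the $\rho_r$ equation. Substituting the limits already obtained, the bracket in the third equation of~\eqref{eq:rhomuv} tends to $r_\infty := a - \sqrt{\Phi b} - (\Psi^+)^2/(4\Phi)$ — indeed, in case $\bar G<G^*$ one has $b(X-\mu_r^\star)^2 = (\Psi^+)^2/(4\Phi)$, and in case $\bar G\geq G^*$ both $\Psi^+$ and $X-\mu_r^\star$ vanish, so the formula still holds. The ODE thus takes the asymptotic form $d\rho_r/dt = (r_\infty + \varepsilon(t) - d\rho_r)\rho_r$ with $\varepsilon(t)\to 0$, which is an asymptotically autonomous logistic equation: if $r_\infty>0$ the positive equilibrium $r_\infty/d$ attracts all positive initial data, while if $r_\infty\leq 0$ every nonnegative solution decays to $0$. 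Combining the two outcomes yields $\rho_{r\infty}=\max(0,r_\infty/d)$ as claimed. The main technical obstacle is the third paragraph's step of converting the triangular asymptotic autonomy into honest convergence when some of the limits are only attractive (not exponentially so) and when the non-smoothness of $(\cdot)_+$ and the Heaviside function forbid a direct linearisation; I would handle this via separate case-by-case monotonicity/Lyapunov arguments rather than a single unified linear-stability computation.
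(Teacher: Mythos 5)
Your proposal is correct and follows the same overall skeleton as the paper: invoke Proposition~\ref{Prop1}, then exploit the triangular structure of the Cauchy problem~\eqref{eq:rhomuv} by treating $v_r$, then $\mu_r$, then $\rho_r$. Where you differ is in the treatment of each scalar ODE. The paper solves all three in closed form --- the Riccati equation for $v_r$ explicitly as~\eqref{vi}, the $\mu_r$ equation by the integrating-factor formula~\eqref{mui} (using the identity $X\,\Psi^-\equiv 0$ to simplify the limit), and the logistic equation for $\rho_r$ via the Bernoulli formula~\eqref{rhoLs0} --- and then reads off the limits, obtaining exponential rates of convergence as a by-product. You instead use qualitative phase-line, Gr\"onwall/Lyapunov and comparison arguments for asymptotically autonomous scalar ODEs. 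This buys you two things: the case split on $H(\overline{G}-G^*)$ handles the non-smooth term $(\mu_r)_+$ cleanly (in each case $\Psi^-$ is either identically zero or linear in $\mu_r$, so no formal manipulation of a state-dependent coefficient inside an integrating factor is needed, which is a point the paper's formula~\eqref{mui} glosses over), and the comparison argument for the logistic equation covers the borderline case $a=\sqrt{b\Phi}+(\Psi^+)^2/(4\Phi)$ without examining the divergence of the integral in the denominator of~\eqref{rhoLs0}. What you give up is the explicit exponential convergence rate, which the paper gets for free; as you note, your asymptotic-autonomy steps must then be justified without rates, but the standard comparison estimates you invoke do suffice for mere convergence. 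The only point worth making explicit is that the formulas~\eqref{Th1ii} implicitly require $b(\overline{G},\overline{L})>0$ (equivalently $(\overline{G},\overline{L})\neq(0,0)$), which you correctly isolate as a degenerate case and which the paper does not discuss.
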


\begin{proof}
Proposition~\ref{Prop1} ensures that for any $t\in[0,\infty)$ the solution of the PIDE~\eqref{ePDEn} subject to the initial condition~\eqref{IC:n} is of the Gaussian form~\eqref{gaussian}. Building on the method of proof presented in~\cite{chisholm2016evolutionary,villa2021evolutionary}, we thus prove Theorem~\ref{Theo1} 
by studying the asymptotic behaviour of the components of the solution to the Cauchy problem~\eqref{eq:rhomuv} for $t \to \infty$ under the additional assumption~\eqref{barGL}. In the remainder of the proof use the abridged notation
\begin{equation*}
a\equiv a(\overline{G},\overline{L})\,, \;\; b\equiv b(\overline{G},\overline{L})\,, \;\; X\equiv X(\overline{G},\overline{L})\,, \;\; \Phi\equiv \Phi(\overline{G},\overline{L})\,, \;\; \Psi^+\equiv \Psi^+(\overline{G},\overline{L})\,, \;\; \Psi^-\equiv \Psi^-(\overline{G},\mu_r)
\end{equation*}
and drop the subscript $r$ for brevity.

\paragraph{Asymptotic behaviour of $v(t) = 1/\sigma^2(t)$ for $t \to \infty$.} Solving the ODE~\eqref{eq:rhomuv}$_1$ subject to the initial condition $v(0)=v_0$ gives
\begin{equation}
\label{vi}
v(t) = \sqrt{\frac{b}{\Phi}} \, \frac{\sqrt{\dfrac{b}{\Phi}} + v_0 - \left(\sqrt{\dfrac{b}{\Phi}} - v_0\right)\exp\left(-4\sqrt{b \, \Phi} \, t\right)}
{\sqrt{\dfrac{b}{\Phi}}+ v_0 + \left(\sqrt{\dfrac{b}{\Phi}} - v_0\right)\exp\left(-4\sqrt{b \Phi} \, t\right)},
\end{equation}  
which implies that
\begin{equation}
\label{viinf}
v(t) \longrightarrow \sqrt{\frac{b}{\Phi}} \quad \text{exponentially fast} \text{ as } t \to \infty.
\end{equation} 

\paragraph{Asymptotic behaviour of $\mu(t)$ for $t \to \infty$.} Solving the ODE~\eqref{eq:rhomuv}$_2$ subject to the initial condition $\mu(0)=\mu_0$ with the integrating factor method yields 
\begin{equation}
\label{mui}
\begin{split}
\mu(t) =\,&\, h + (\mu_0 -X)\exp\left[ -\int_0^t \left(\frac{2b}{v(z)}+\Psi^-\right){\rm d}z \right] + \\[5pt]
&+ \left(\Psi^+-X\Psi^-\right)\left\{ \int_0^t \exp \left[ \int_0^z \left(\frac{2b}{v(\tau)}+\Psi^-\right) {\rm d}\tau \right]{\rm d}z \,\right\}  \exp\left[ -\int_0^t \left(\frac{2b}{v(z)}+\Psi^-\right){\rm d}z \right]\,.
\end{split}
\end{equation} 
We compute the integrals in Eq.\eqref{mui} using the solution of the ODE~\eqref{eq:rhomuv}$_1$ given by Eq.\eqref{vi}. Introducing the notation
\begin{equation}
\label{del}
\delta = \frac{\sqrt{b/\Phi} - v_0}{\sqrt{b/\Phi} + v_0}\,,
\end{equation}
we obtain
\begin{equation*}
\begin{split}
\mu(t) =\,&\, X + \frac{(1-\delta)( \mu_0 -X )}{\exp\left(2\sqrt{b \Phi}\, t\right) -\delta \exp\left(-2\sqrt{b \Phi} \,t\right) + (1-\delta)\Psi^-t} \, + \\[5pt]
&+ \frac{(\Psi^+-X\Psi^-)}{2\sqrt{b \Phi}} \, \frac{\left[ \exp\left(2\sqrt{b \Phi} \,t\right) +\delta \exp\left(-2\sqrt{b \Phi} \,t\right) - (1+\delta) + (1-\delta)\Psi^-\sqrt{b \Phi}\,t^2\right]}{\left[ \exp\left(2\sqrt{b \Phi} \,t\right) -\delta \exp\left(-2\sqrt{b \Phi} \,t\right) + (1-\delta)\Psi^-t\right]}\,.
\end{split}
\end{equation*} 
Since, under assumptions~\eqref{eu}, \eqref{eX}, \eqref{epsi} and~\eqref{epsi2}, we have $X \, \Psi^- \equiv X(\overline{G},\overline{L})\,\Psi^-(\overline{G},\mu)=0$ for any $\overline{G} \geq 0$, the latter expression of $\mu(t)$ allows us to conclude that
\begin{equation}
\label{muiinf}
\mu(t)  \longrightarrow X + \frac{\Psi^+}{2\sqrt{b \Phi}}\quad \text{exponentially fast} \text{ as } t \to \infty.
\end{equation}

\paragraph{Asymptotic behaviour of $\rho(t)$ for $t \to \infty$.} We define
$$
w \equiv w(t) \equiv w(v(t),\mu(t),\overline{G},\overline{L}) = \left(\sqrt{b \Phi } - \frac{b}{v} \right) - b \left(\mu - X - \frac{\Psi^+}{2\sqrt{ b\Phi}} \right)^2
$$
and rewrite the ODE~\eqref{eq:rhomuv}$_3$ as
\begin{equation}
\label{oderhoLs0}
\frac{{\rm d}\rho}{{\rm d} t}  = \left[\left(a +\frac{(\Psi^+)^2}{4\Phi} - \sqrt{b\Phi  } -\Psi^+\sqrt{\frac{b}{\Phi}}(\mu-X) + w\right) - d \rho\right] \rho.
\end{equation}
Solving Eq.\eqref{oderhoLs0} subject to the initial condition $\rho(0)=\rho_0$ yields
\begin{equation}
\label{rhoLs0}
\rho(t) = \frac{\displaystyle \rho_0 \exp\left[\left(a +\frac{(\Psi^+)^2}{4\Phi} - \sqrt{b \, \Phi}\right) t - \Psi^+\sqrt{\frac{b}{\Phi}}\int_0^t (\mu(z)-X){\rm d}z + \int_0^t  w(z) \, {\rm d}z\right]}{\displaystyle 1 + d\, \rho_0 \int_0^t \exp\left[\left(a +\frac{(\Psi^+)^2}{4\Phi} - \sqrt{b \, \Phi}\right) z - \Psi^+\sqrt{\frac{b}{\Phi}}\int_0^z (\mu(\tau)-X){\rm d}\tau+ \int_0^z w(\tau) \, {\rm d}\tau\right]{\rm d}z}.
\end{equation}
The asymptotic results~\eqref{viinf} and \eqref{muiinf} ensure that
\begin{equation}
\label{newresetatozero}
w(t)  \longrightarrow 0 \quad \text{exponentially fast} \text{ as } t \to \infty \,.
\end{equation}
Furthermore, the asymptotic results~\eqref{muiinf} and~\eqref{newresetatozero} imply that in the asymptotic regime $t \to \infty$ we have
\begin{equation}\label{as1}
\begin{split}
&\exp\left[\left(a +\frac{(\Psi^+)^2}{4\Phi} - \sqrt{b \, \Phi}\right) t - \Psi^+\sqrt{\frac{b}{\Phi}}\int_0^t (\mu(z)-X){\rm d}z + \int_0^t  w(z) \, {\rm d}z\right]\\[5pt]
& \hspace{8cm} \sim A\, \exp\left[\left(a - \sqrt{b \, \Phi} -\frac{(\Psi^+)^2}{4\Phi} \right) t \right]\,,
\end{split}
\end{equation}
for some positive constant factor $A\equiv A(\overline{G},\overline{L})$. Therefore, Eq.~\eqref{rhoLs0} allows us to conclude that
\begin{equation}
\label{rhto0}
\text{if } \quad \sqrt{b\,\Phi}\,+\frac{\left(\Psi^+\right)^2}{4\Phi}\geq a \quad \text{ then } \quad \rho(t)\longrightarrow 0 \;\; \text{ as } \;\;t \to \infty.
\end{equation}
On the other hand, the asymptotic results~\eqref{muiinf} and~\eqref{newresetatozero} imply that, if $\sqrt{b \, \Phi} + (\Psi^+)^2/(4\Phi)< a$, in the asymptotic regime $t \to \infty$ we also have
\begin{equation}\label{as2}
\displaystyle{
\begin{split}
&\int_0^t\exp\left[\left(a +\frac{(\Psi^+)^2}{4\Phi} - \sqrt{b \, \Phi}\right) z - \Psi^+\sqrt{\frac{b}{\Phi}}\int_0^z (\mu(\tau)-X){\rm d}\tau + \int_0^z  w(\tau) \, {\rm d}\tau\right]{\rm d}z\\[5pt]
& \hspace{8cm} \sim B\, \frac{\exp\left[\left(a - \sqrt{b \, \Phi} -\frac{(\Psi^+)^2}{4\Phi} \right) t \right]}{\left(a - \sqrt{b \, \Phi} -\frac{(\Psi^+)^2}{4\Phi} \right)}\,,
\end{split}}
\end{equation}
for some positive constant factor $B\equiv B(\overline{G},\overline{L})$. The asymptotic relations~\eqref{as1} and~\eqref{as2}, along with Eq.~\eqref{rhoLs0}, allow us to conclude that
\begin{equation}
\label{rhtogr0}
\text{if }\quad \sqrt{b\,\Phi}\,+\frac{\left(\Psi^+\right)^2}{4\Phi}< a \quad \text{then} \quad \rho(t) \longrightarrow \frac{1}{d}\left[ a - \sqrt{b \, \Phi} - \frac{\left(\Psi^+\right)^2}{4\Phi} \right]\;\; \text{ as } \;\; t \to \infty.
\end{equation}
Taken together, the asymptotic results~\eqref{rhto0} and \eqref{rhtogr0} yield
\begin{equation}
\label{rhoinf}
\rho(t)  \longrightarrow \max\left(0, \frac{1}{d}\left[a - \sqrt{b \, \Phi} - \frac{(\Psi^+)^2}{4\Phi}\right]\right) \quad \text{as } t \to \infty.
\end{equation}
Claims~\eqref{Th1i} and \eqref{Th1ii} follow from the asymptotic results~\eqref{viinf}, \eqref{muiinf} and \eqref{rhoinf}.
\end{proof}

\begin{remark} \label{Rem1}
The asymptotic results of Theorem~\ref{Theo1} along with the relations given by Eq.~\eqref{erhomusigmaxy} imply that, when $(G(t),L(t)) \equiv (\overline{G},\overline{L})$,  
\beq
\label{Th1iresc}
\rho(t) \longrightarrow \rho_{\infty}(\overline{G},\overline{L}), \quad \mu(t) \longrightarrow \mu_{\infty}(\overline{G},\overline{L}), \quad \sigma^2(t) \longrightarrow \sigma^2_{\infty}(\overline{G},\overline{L})  \quad \text{as } t \to \infty,
\eeq
where
\beq
\label{Th1iiy}
\begin{split}
&\rho_{\infty}(\overline{G},\overline{L}) = \max\Bigg(0,\, \frac{1}{d} \left[a(\overline{G},\overline{L})-\sqrt{\Phi(\overline{G},\overline{L})\,b(\overline{G},\overline{L})}\, - \frac{\big(\Psi^+(\overline{G},\overline{L})\big)^2}{4\Phi(\overline{G},\overline{L})}\right]\Bigg), \\[5pt]
&\mu_{\infty}(\overline{G},\overline{L}) = y_L + (y_H - y_L) \left[X(\overline{G},\overline{L}) + \frac{\Psi^+(\overline{G},\overline{L})}{2\sqrt{\Phi(\overline{G},\overline{L})\,b(\overline{G},\overline{L})}}\right], \\[5pt]
& \sigma^2_{\infty}(\overline{G},\overline{L}) = (y_H - y_L)^2 \sqrt{\frac{\Phi(\overline{G},\overline{L})}{b(\overline{G},\overline{L})}}.
\end{split}
\eeq
\end{remark}

\clearpage

\section{Supplementary figures}\label{sec:sipfigtab}

\begin{figure}[htb!]
\centering
\includegraphics[width=0.95\linewidth]{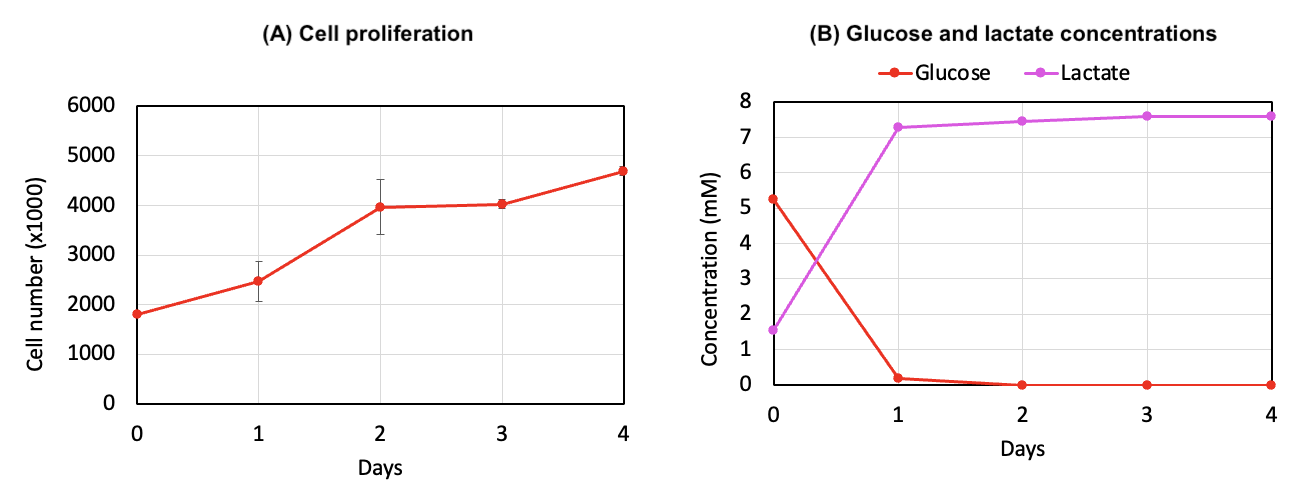}
\caption{\label{SFig:data} \textbf{Dynamics of cell proliferation and glucose and lactate concentrations in `glucose-deprivation' experiments conducted on MCF7 cells.} Dynamics of  cell proliferation (panel {\bf (A)}), glucose concentration (panel {\bf (B)}, red line, left y-axis) and lactate concentration (panel {\bf (B)}, pink line, right y-axis) in `glucose-deprivation' experiments conducted on MCF7 cells for four days. Cell proliferation was assessed by counting the number of viable cells upon seeding (i.e. day 0) and at the end of each day of culture (i.e. days 1-4). Glucose and lactate concentrations were measured in the culture medium at days 0-4. \rev{The figure in panel {\bf (A)} displays the average (dots) and standard deviation (error bars) of two replicate experiments.}}
\end{figure}

\begin{figure}[htb!]
\centering
\includegraphics[width=0.5\linewidth]{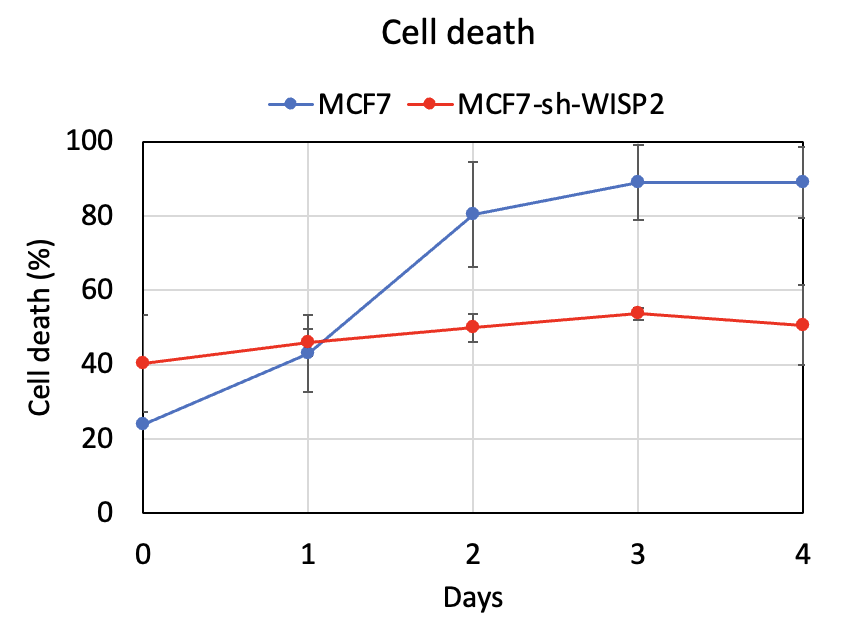}
\caption{\label{SFig:death}\textbf{Dynamics of cell death in `glucose-deprivation' experiments conducted on MCF7 and MCF7-sh-WISP2 cells.} Dynamics of cell death in `glucose-deprivation' experiments conducted on MCF7 cells (blue line) and MCF7-sh-WISP2 cells (red line) for four days. Cell death was assessed by measuring the percentage of apoptotic cells upon seeding (i.e. day 0) and at the end of each day of culture (i.e. days 1-4). \rev{This figure displays the average (dots) and standard deviation (error bars) of two replicate experiments.} }
\end{figure}

\begin{figure}[htb!]
\centering
\includegraphics[width=0.6\linewidth]{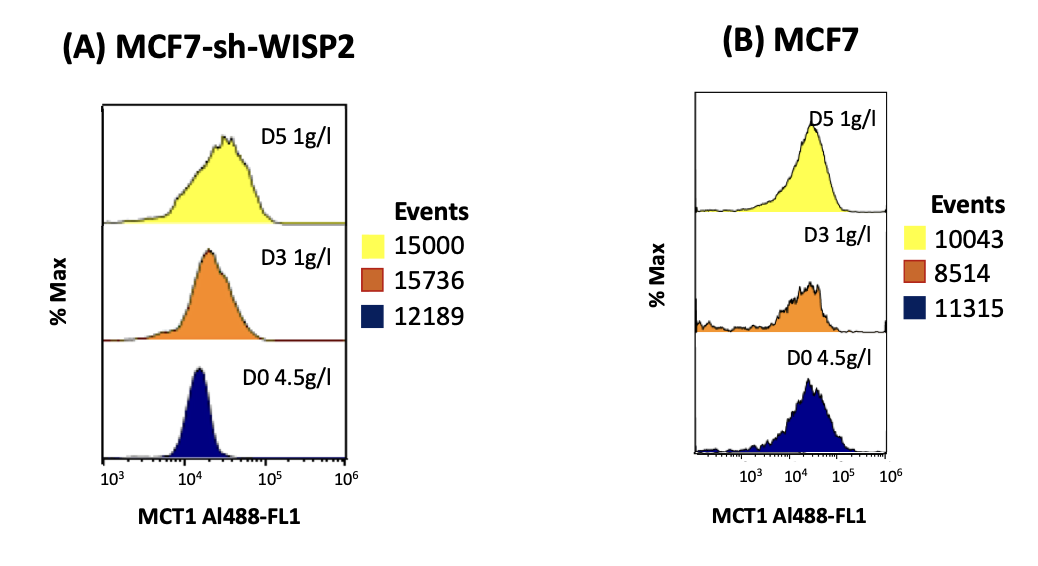}
\caption{\label{Fig:cytometryMCF7} \textbf{Dynamics of MCT1 expression in `glucose-deprivation' experiments conducted on MCF7-sh-WISP2 and MCF7 cells.} Comparison between MCT1 protein expression of MCF7-sh-WISP2 cells (panel {\bf (A)}) and MCF7 cells (panel {\bf (B)}), assessed through flow cytometry analysis, upon seeding (i.e. on day 0) and on days 3 and 5 of `glucose-deprivation' experiments conducted for five days (sub-panel D0 and sub-panels D3 and D5). \rev{The `Events' legends indicate the number of events (i.e. the total number of cells analysed) for each distribution plotted on a logarithmic scale.}}
\end{figure}

\begin{figure}[htb!]
\centering
\includegraphics[width=1\linewidth]{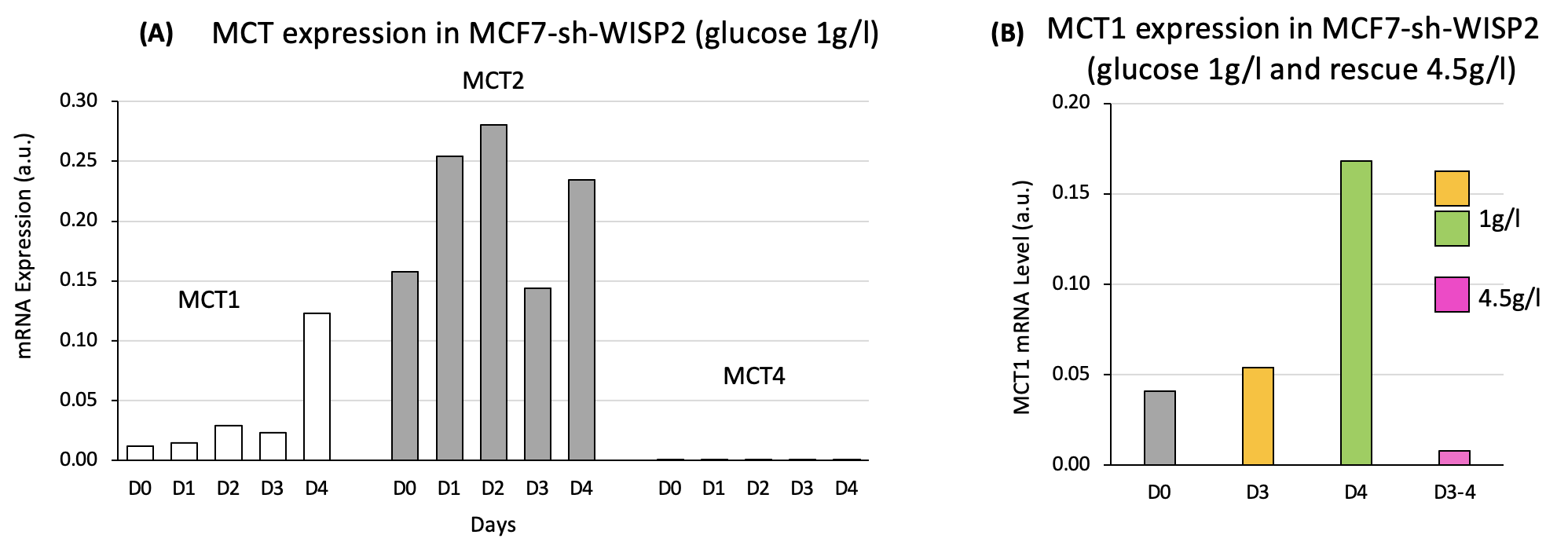}
\caption{\label{SFig:mRNA} \textbf{Dynamics of MCT expression in `glucose-deprivation' and `rescue' experiments conducted on MCF7-sh-WISP2 cells.} {\bf (A)} MCT1, MCT2 and MCT4 mRNA expression of MCF7-sh-WISP2 cells, assessed through RT-qPCR, upon seeding (i.e. on day 0) and on days 1-4 of `glucose-deprivation' experiments conducted for four days (column D0 and columns D1-D4). {\bf (B)} MCT1 mRNA expression of MCF7-sh-WISP2 cells, assessed through RT-qPCR, upon seeding (i.e. on day 0) and on days 3 and 4 of `glucose-deprivation' experiments conducted for four days (column D0 and columns D3 and D4). MCT1 mRNA expression during the phase of rescue from glucose deprivation in the corresponding `rescue' experiments (i.e. on days 3 and 4) is also displayed (column D3-4). The mRNA levels in the plots indicate the abundance of the target gene relative to that of endogenous control Actin used to normalise the initial quantity and purity of total RNA.}
\end{figure}

\begin{figure}[htb!]
\centering
\includegraphics[width=\linewidth]{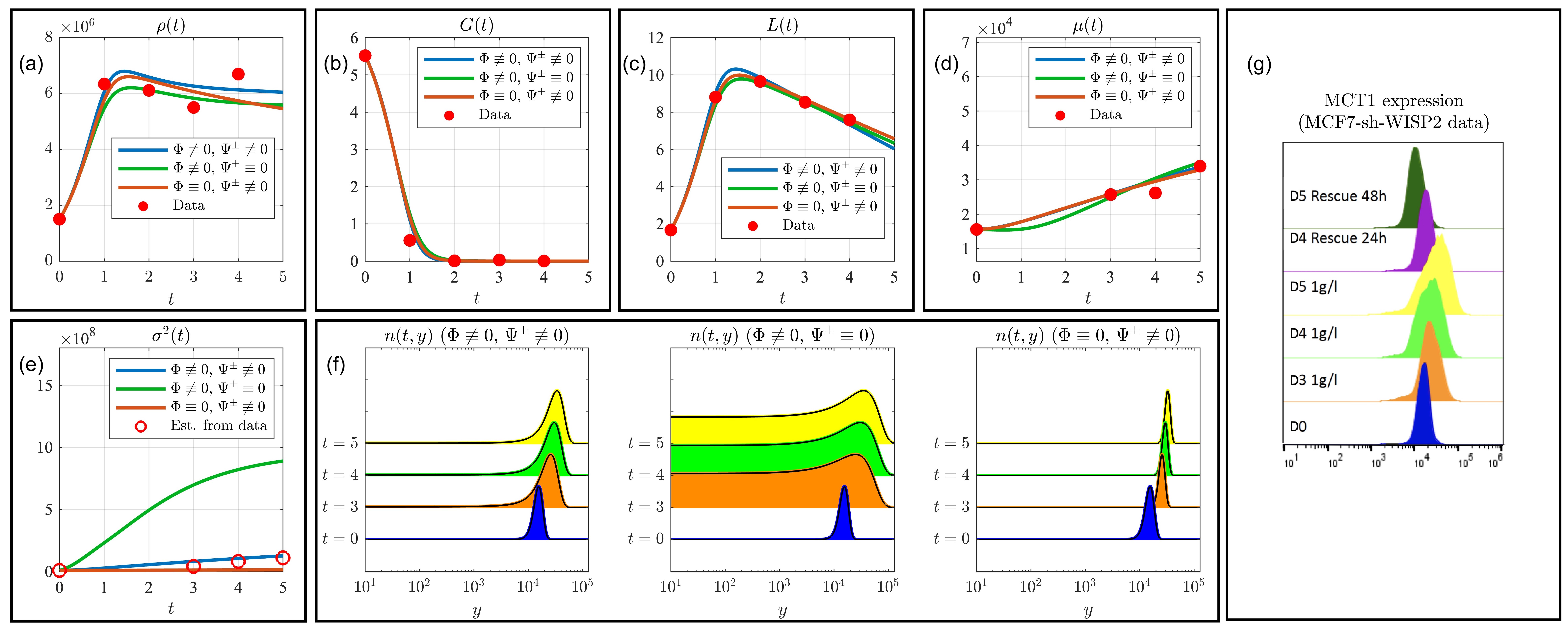}
\caption{\label{Fig:calib} \textbf{Numerical simulations of `glucose-deprivation' experiments conducted on MCF7-sh-WISP2 cells.} 
Simulated dynamics of the cell number $\rho(t)$ (panel {\bf (a)}), the glucose concentration $G(t)$ (panel {\bf (b)}), the lactate concentration $L(t)$ (panel {\bf (c)}), the mean level of MCT1 expression $\mu(t)$ (panel {\bf (d)}), the related variance $\sigma^2(t)$ (panel {\bf (e)}), and the MCT1 expression distribution $n(t,y)$ (panel {\bf (f)}, $t=0$ - $t=5$) in `glucose-deprivation' experiments conducted on MCF7-sh-WISP2 cells. Numerical simulations were carried out for the calibrated model in which both \rev{SPCs and FECs} in MCT1 expression are included (i.e. $\Phi\not\equiv0$, $\Psi^{\pm}\not\equiv0$) and for calibrated reduced models in which only \rev{FECs} in MCT1 expression are included (i.e. $\Phi\not\equiv0$, $\Psi^{\pm}\equiv0$) or only \rev{SPCs} in MCT1 expression are included (i.e. $\Phi\equiv0$, $\Psi^{\pm}\not\equiv0$),  under the OPS reported in Tab.S1. The MCT1 expression distribution is plotted on a logarithmic scale as for the outputs of flow cytometry analyses (panel {\bf (g)}) to facilitate visual comparison. 
The red markers highlight experimental data that are used to carry out model calibration. 
The values of $t$ are in days, while the values of \rev{$G(t)$ and $L(t)$ are in mM}.}
\end{figure}

\begin{figure}[htb!]
\centering
\includegraphics[width=\linewidth]{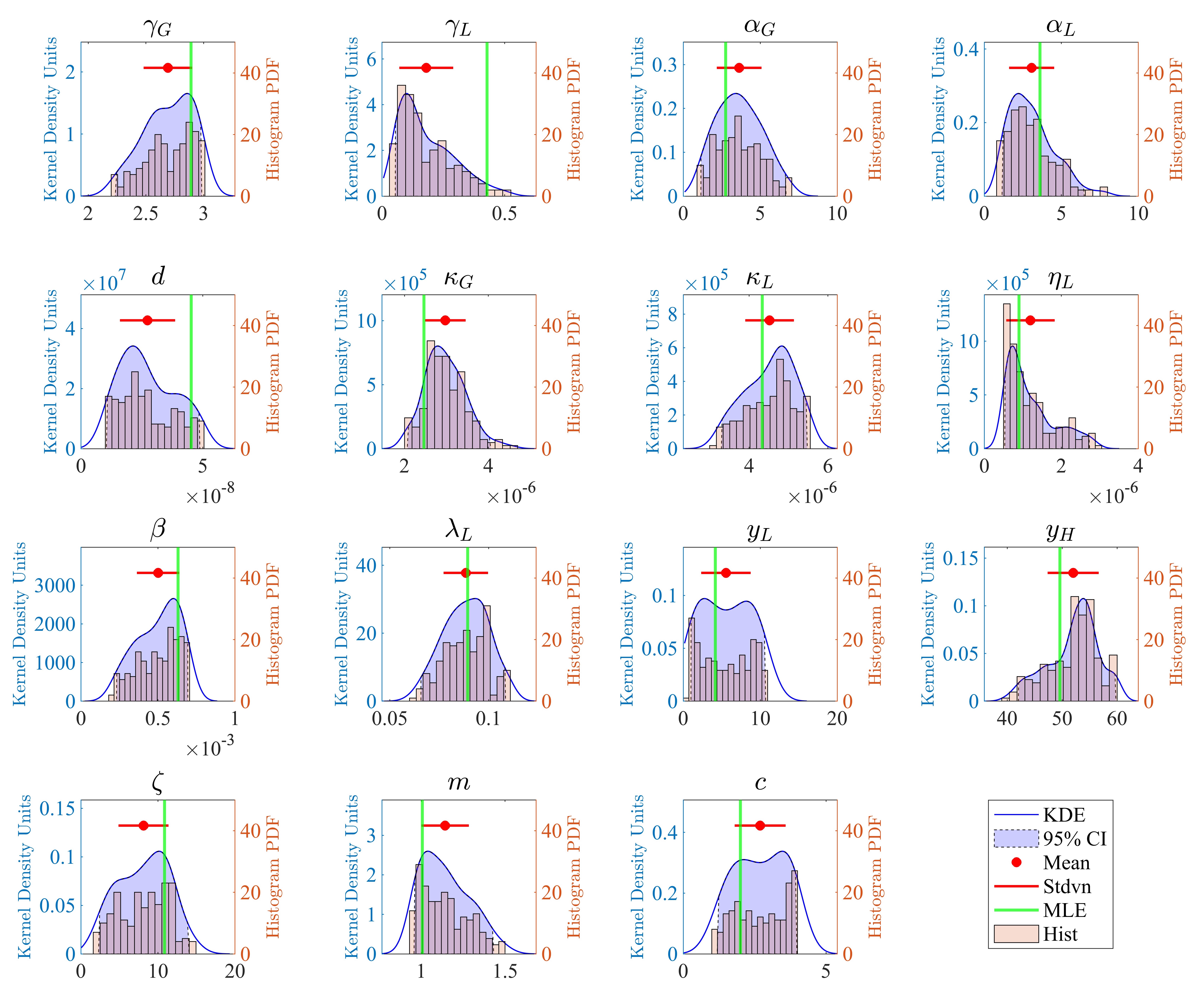}
\caption{\label{Fig:bootstrap} \rev{\textbf{Bootstrap sampling distributions of the model parameters obtained by fitting data from `glucose-deprivation' experiments.} Parameter distributions obtained through the bootstrapping algorithm described in Sec.\ref{sec:methods:bootstrapping} for the model in which both FECs and SPCs in MCT1 expression are included (i.e. $\Phi\not\equiv0$, $\Psi^{\pm}\not\equiv0$), setting $G^*>5.5$mM (and thus ignoring $\lambda_G$ as irrelevant for predicting dynamics under glucose deprivation), generating $J=200$ bootstrap samples.  For each parameter the following statistics are displayed: the probability density function (PDF) of the samples (orange histogram); the kernel density estimation (KDE), i.e. the smooth PDF obtained from the bootstrap samples by applying the {\sc Matlab} function  \texttt{ksdensity} (blue line); the empirical 95\% confidence interval (blue area); the bootstrap mean (red dot) and standard deviation (red line); the parameter value in the optimal parameter set (OPS) listed in the second column of Tab.S1 (green line), for comparison.   }}
\end{figure}

\begin{figure}[htb!]
\centering
\includegraphics[width=0.4\linewidth]{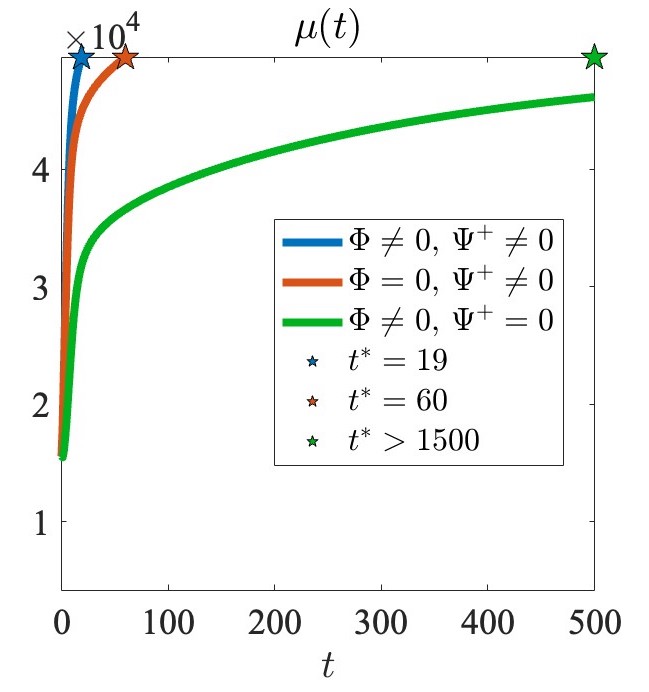}
\caption{\label{Fig:mu1}\textbf{Numerical simulations of long-term dynamics of the mean level of MCT1 expression of glucose-deprived MCF7-sh-WISP2 cells.} Long-term dynamics of the mean level of MCT1 expression of MCF7-sh-WISP2 cells $\mu(t)$ in `glucose-deprivation' experiments simulated through the calibrated model in which both  \rev{FECs and SPCs}  in MCT1 expression are included (i.e. $\Phi\not\equiv0$, $\Psi^{\pm}\not\equiv0$) and through calibrated reduced models in which only \rev{SPCs} in MCT1 expression are included (i.e. $\Phi\equiv0$, $\Psi^{\pm}\not\equiv0$) or only \rev{FECs} in MCT1 expression are included (i.e. $\Phi\not\equiv0$, $\Psi^{\pm}\equiv0$), under the OPS reported in Tab.S1. Dynamics are shown for $t\in[0,t^*]$ (in days), with $t^*$ being the first time instant when the mean level of MCT1 expression attains the value $y_H$,  which in our modelling framework is the level endowing MCF7-sh-WISP2 cells with the maximum capability of taking lactate from the extracellular environment and reusing it to produce the energy required for their proliferation under glucose deprivation. 
The value of $t^*$ is marked by a star, i.e. $t^*=72$ for the model with $\Phi\not\equiv0$ and $\Psi^{\pm}\not\equiv0$, $t^*=325$ for the model with $\Phi\equiv0$ and $\Psi^{\pm}\not\equiv0$, and $t^*>2000$ for the model with $\Phi\not\equiv0$ and $\Psi^{\pm}\equiv0$.
}
\end{figure}

\begin{figure}[htb!]
\centering
\includegraphics[width=1\linewidth]{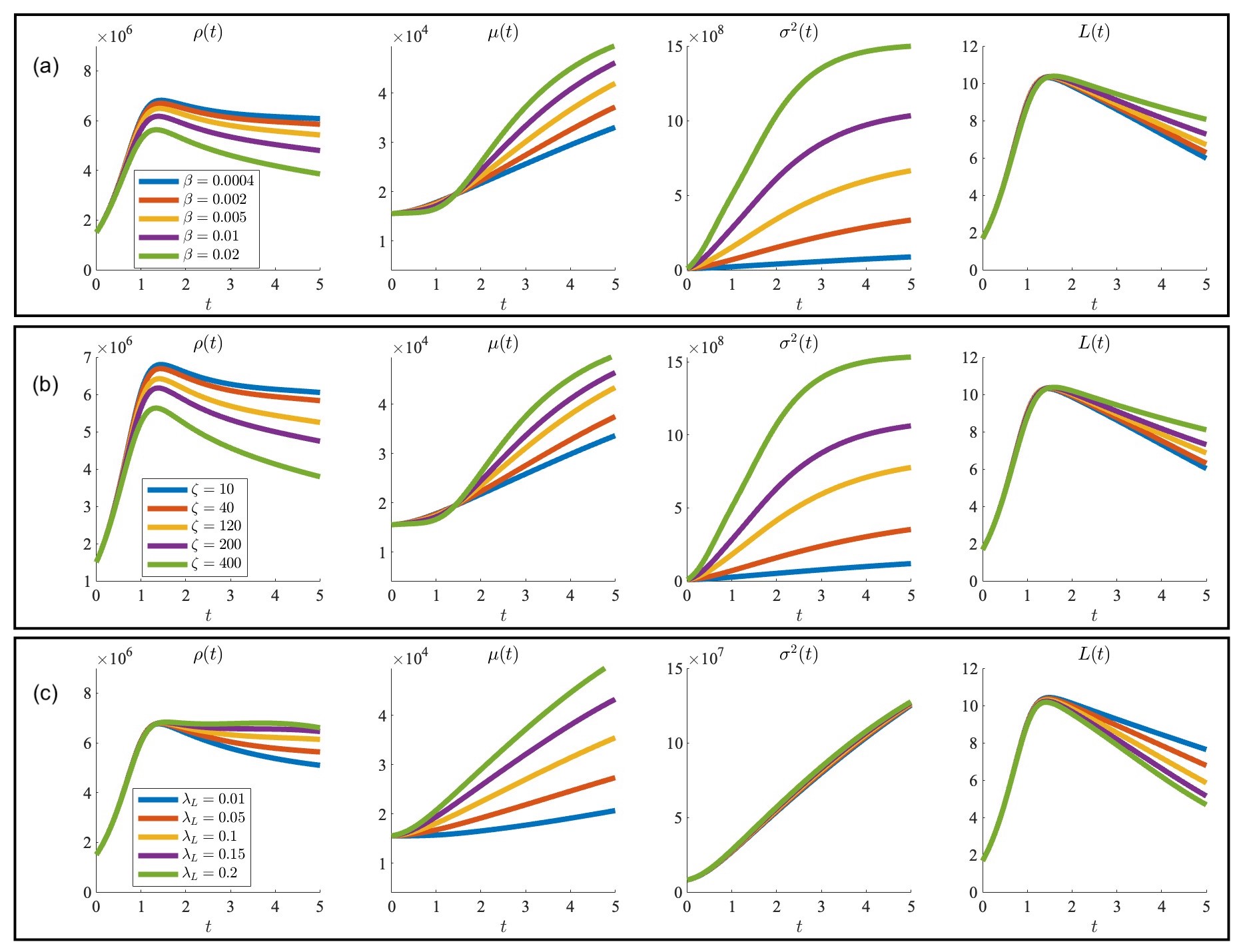}
\caption{\label{SFig:beta}\textbf{Additional numerical simulations of `glucose-deprivation' experiments conducted on MCF7-sh-WISP2 cells.} 
Simulated dynamics of the cell number $\rho(t)$ (first column), the mean level of MCT1 expression $\mu(t)$ (second column), the related variance $\sigma^2(t)$ (third column), and the lactate concentration $L(t)$ (fourth column) in `glucose-deprivation' experiments conducted on MCF7-sh-WISP2 cells. Numerical simulations were carried out for the calibrated model in which both  \rev{FECs and SPCs}  in MCT1 expression are included (i.e. $\Phi\not\equiv0$, $\Psi^{\pm}\not\equiv0$), under the OPS reported in Tab.S1 but for different values of the parameter $\beta$ (panel {\bf (a)}) or different values of the parameter $\zeta$ (panel {\bf (b)}), which correspond to different values of the rate of \rev{FECs} in MCT1 expression $\Phi$ (cf. the definition given by Eq.\eqref{ephi}), or different values of the parameter $\lambda_L$ (panel {\bf (c)}), which correspond to different values of the rate at which \rev{SPCs} lead to an increase in MCT1 expression $\Psi^+$ (cf. the definition given by Eq.\eqref{epsi2}). In particular: in panel {\bf (a)}, $\beta=0.0004$ (blue lines), $\beta=0.002$ (orange lines), $\beta=0.005$ (yellow lines), $\beta=0.01$ (purple lines), and $\beta=0.02$ (green lines); in panel {\bf (b)}, $\zeta=7.9143$ (blue lines), $\zeta=40$ (orange lines), $\zeta=120$ (yellow lines), $\zeta=200$ (purple lines), and $\zeta=400$ (green lines); in panel {\bf (c)}, $\lambda_L=0.0693$ (blue lines), $\lambda_L=0.1$ (orange lines), $\lambda_L=0.18$ (yellow lines), $\lambda_L=0.25$ (purple lines), and $\lambda_L=0.32$ (green lines). The values of $t$ are in days, while the values of $L(t)$ are in mM.}
\end{figure}

\begin{figure}[ht!]
\centering
\includegraphics[width=1\linewidth]{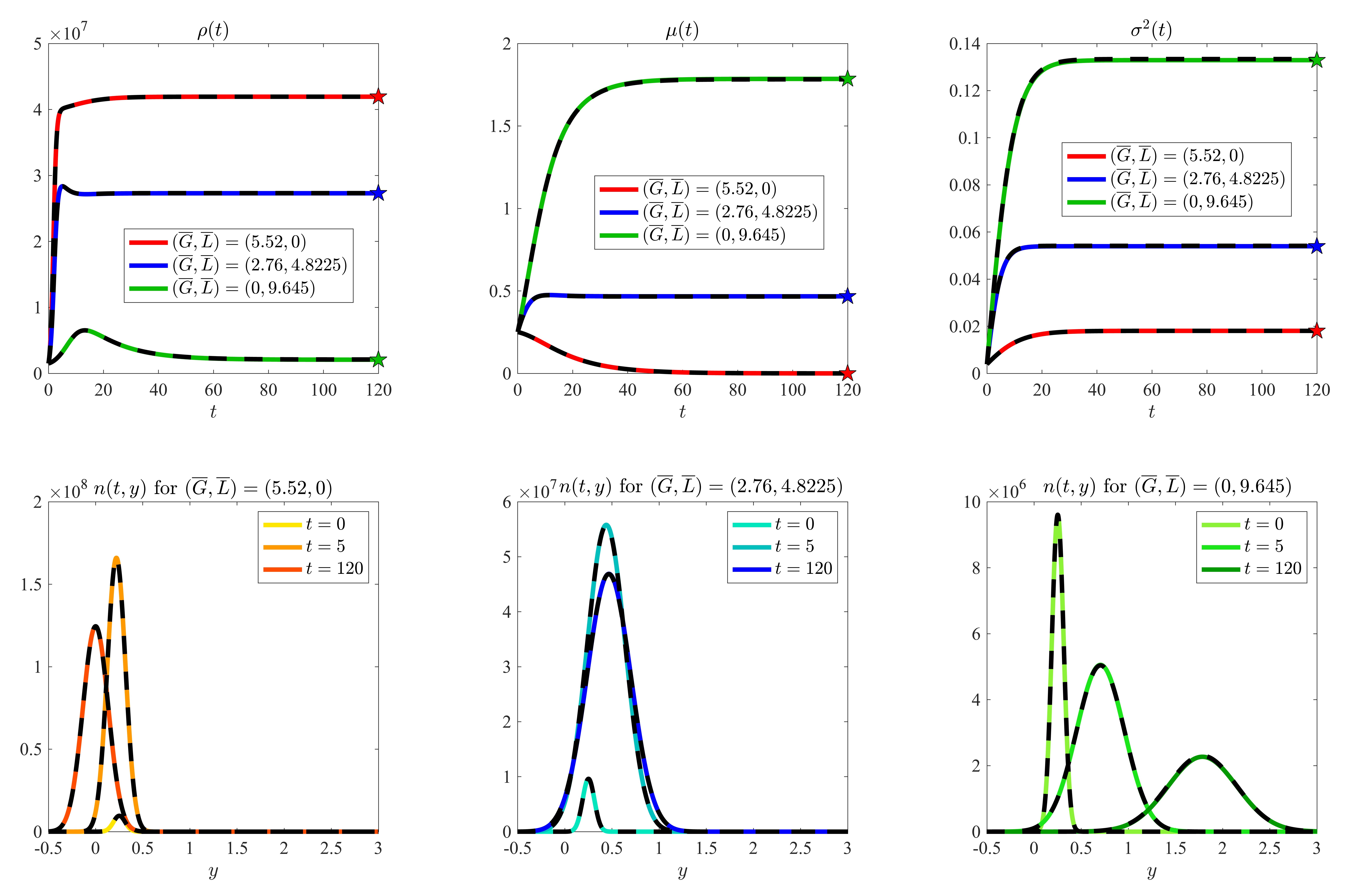}
\caption{\label{SFig:analysis}\textbf{Numerical simulations of long-term dynamics of glucose-deprived MCF7-sh-WISP2 cells under constant concentrations of glucose and lactate.} 
{\bf Top row.} Simulated dynamics of the cell number $\rho_r(t)$ (first column), the rescaled mean level of MCT1 expression $\mu_r(t)$ 
 (second column), and the related variance $\sigma^2_r(t)$ (third column) under constant concentrations of glucose and lactate, i.e. $(G(t), L(t)) \equiv (\overline{G}, \overline{L})$ with $(\overline{G},\overline{L})=(5.52,0)$ (red lines), $(\overline{G},\overline{L})=(2.76,4.8225)$ (blue lines) and $(\overline{G},\overline{L})=(0,9.645)$ (green lines). Numerical simulations were carried out for the calibrated model in which both  \rev{FECs and SPCs}  in MCT1 expression are included (i.e. $\Phi\not\equiv0$, $\Psi^{\pm}\not\equiv0$), under the OPS for cell dynamics reported in Tab.S1. The black, dashed lines highlight the dynamics of the same quantities obtained by solving numerically the Cauchy problem~\eqref{eq:rhomuv} complemented with Eq.~\eqref{IC:pars} and with $(G(t), L(t)) \equiv (\overline{G}, \overline{L})$, while the coloured stars mark the analytical equilibrium values computed via Eq.\eqref{Th1ii}. \textbf{Bottom row.} Corresponding dynamics of the rescaled MCT1 expression distribution $n_r(t,x)$ for $(\overline{G},\overline{L})=(5.52,0)$ (left panel), $(\overline{G},\overline{L})=(2.76,4.8225)$ (central panel) and $(\overline{G},\overline{L})=(0,9.645)$ (right panel). Coloured, solid lines refer to different times $t$ and the black, dashed lines highlight the rescaled MCT1 expression distribution given by Eq.~\eqref{gaussian} whereby $\rho_r(t)$, $\mu_r(t)$ and $\sigma^2_r(t)$ are obtained by solving numerically the Cauchy problem~\eqref{eq:rhomuv} complemented with Eq.~\eqref{IC:pars} and with $(G(t), L(t)) \equiv (\overline{G}, \overline{L})$. {The values of $t$ are in days, while the values of $\overline{G}$ and $\overline{L}$ are in mM}}
\end{figure}

\clearpage

\begin{figure}[htb!]
\centering
\includegraphics[width=\linewidth]{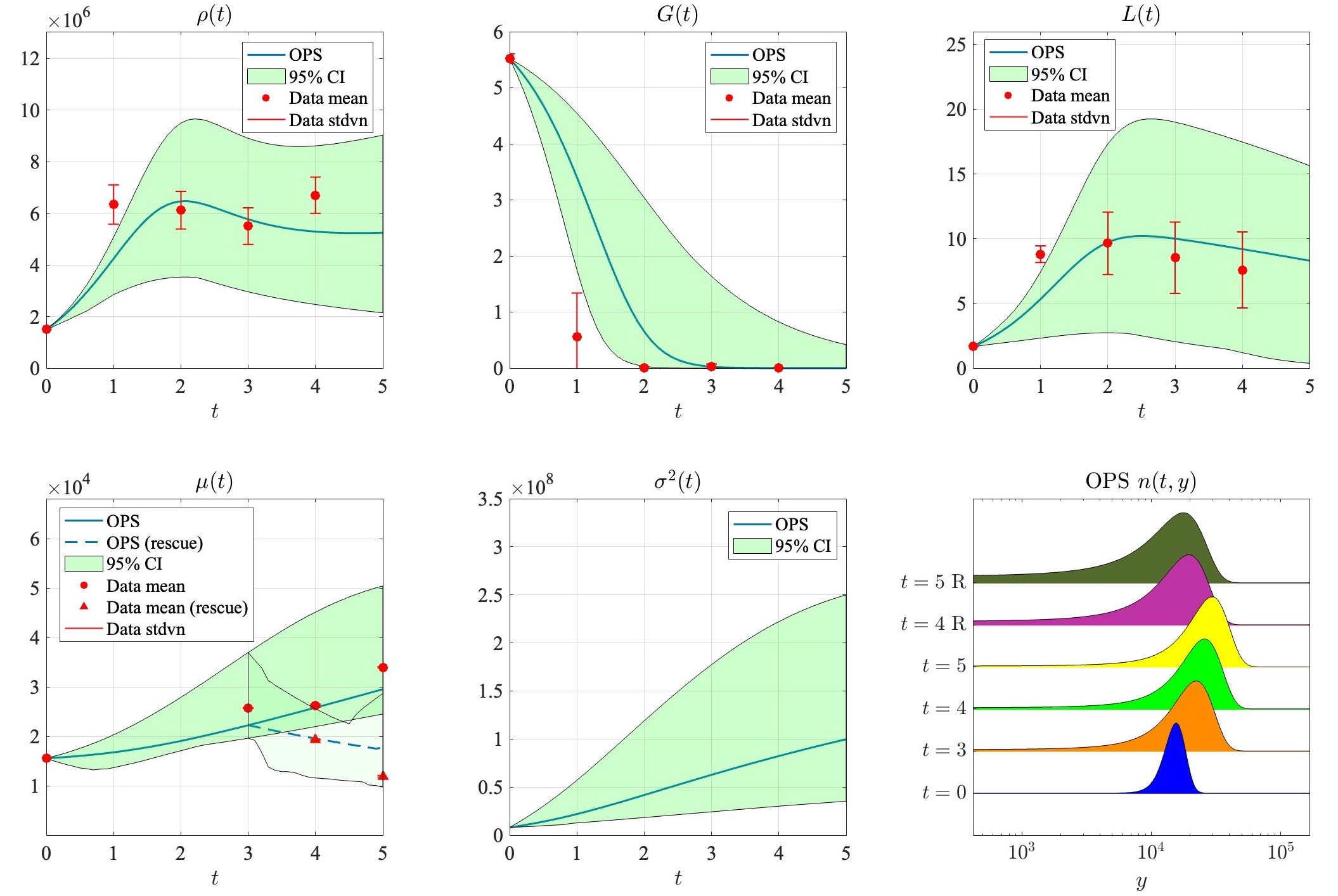}
\caption{\label{Fig:calib_ALT} \textbf{Numerical simulations of `glucose-deprivation' and `rescue' experiments conducted on MCF7-sh-WISP2 cells.}
Simulated dynamics of the cell number $\rho(t)$ (top-left panel), the glucose concentration $G(t)$ (top-central panel), the lactate concentration $L(t)$ (top-right panel), the mean level of MCT1 expression $\mu(t)$ (bottom-left panel, solid line), the related variance $\sigma^2(t)$ (bottom-central panel), and the MCT1 expression distribution $n(t,y)$ (bottom-right panel, $t=0$ - $t=5$) in `glucose-deprivation' experiments conducted on MCF7-sh-WISP2 cells. Numerical simulations were carried out for the calibrated model in which both  \rev{FECs and SPCs}  in MCT1 expression are included (i.e. $\Phi\not\equiv0$ and $\Psi^{\pm}\not\equiv0$), under the OPS reported in Tab.\ref{tab:bootstrap:ALT} \rev{(blue line), and under 200 parameter sets generated by random sampling from the empirical 95\% confidence interval (CI) of the bootstrap sampling distributions (green area) -- see Fig.~\ref{Fig:bootstrap_ALT}.} The MCT1 expression distribution \rev{obtained under the OPS} is plotted on a logarithmic scale as for the outputs of flow cytometry analyses to facilitate visual comparison. The MCT1 expression distribution during the phase of rescue from glucose deprivation in the corresponding simulations of `rescue' experiments is also displayed (bottom-right panel, $t=4 \, {\rm R}$ and $t=5 \, {\rm R}$) along with the mean level of MCT1 expression (bottom-left panel, dashed \rev{blue} line \rev{and light green area}). The red markers highlight \rev{average (scatter points) and standard deviation (error bars) of the} experimental data that are used to carry out model calibration, with circles and triangles corresponding to `glucose-deprivation' and `rescue' experiments, respectively. The values of $t$ are in days, \rev{while the values of $G(t)$ and $L(t)$ are in mM}.}
\end{figure}

\begin{sidewaysfigure}[htb!]
\centering
\includegraphics[width=\linewidth]{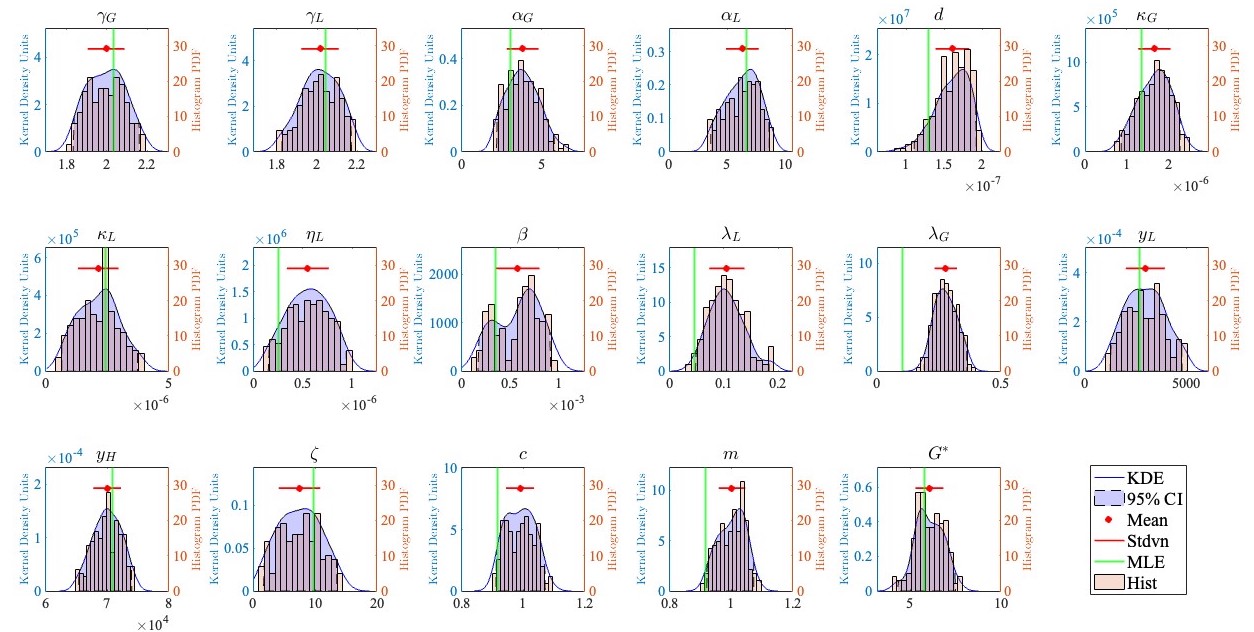}
\caption{\label{Fig:bootstrap_ALT} \rev{\textbf{Bootstrap sampling distributions of the model parameters obtained by fitting data from `glucose-deprivation' and `rescue' experiments.} Parameter distributions obtained through the bootstrapping algorithm described in Sec.\ref{sec:methods:bootstrapping} for the model in which both FECs and SPCs in MCT1 expression are included (i.e. $\Phi\not\equiv0$, $\Psi^{\pm}\not\equiv0$), generating $J=200$ bootstrap samples.  For each parameter the following statistics are displayed: the probability density function (PDF) of the samples (orange histogram); the kernel density estimation (KDE), i.e. the smooth PDF obtained from the bootstrap samples by applying the {\sc Matlab} function  \texttt{ksdensity} (blue line); the empirical 95\% confidence interval (blue area); the bootstrap mean (red dot) and standard deviation (red line); the parameter value in the optimal parameter set (OPS) listed in the second column of Tab.~\ref{tab:bootstrap:ALT} (green line), for comparison.}}
\end{sidewaysfigure}

\clearpage

\bibliographystyle{siam}
\bibliography{Lactate.bib}

\end{document}